\documentclass[prb,twocolumn,superscriptaddress,aps,nofootinbib]{revtex4-2}
\usepackage{microtype}
\usepackage{graphicx}
\usepackage{amssymb,amsmath,amsthm,mathtools,amscd}
\usepackage{bm}
\usepackage[T1]{fontenc}
\usepackage[utf8]{inputenc}
\usepackage{lmodern}
\usepackage{textcomp}
\usepackage{amstext}
\usepackage[Symbol]{upgreek}
\usepackage{subfigure}
\usepackage{booktabs}
\usepackage{dcolumn}
\usepackage{color}
\usepackage[table]{xcolor}
\usepackage{enumerate}
\usepackage{latexsym}
\usepackage{longtable}
\usepackage{multirow}
\usepackage{makecell}
\usepackage[colorlinks=true,allcolors=blue,urlcolor=blue,citecolor=blue]{hyperref}
\usepackage{booktabs,array,tabularx}
\usepackage{CJKutf8}

\newtheorem{proposition}{Proposition}
\newtheorem{corollary}{Corollary}

\usepackage[normalem]{ulem}


\begin{document}

\begin{CJK*}{UTF8}{gbsn}

\title{A Pseudoscalar Representation Mapping from Parent-Group Vibrational Normal Modes to Symmetry-Adapted Magnetic Structures}

\author{Yachao Liu (刘雅超)}
\email{liuyachao@xaut.edu.cn}
\affiliation{Department of Applied Physics, Xi'an University of Technology, Xi'an 710054, China}
\affiliation{Interdisciplinary Research Center for Semiconductor and Artificial Intelligence, Xi'an University of Technology, Xi'an 710048, China}

\author{Haibo Niu (牛海波)}
\affiliation{Department of Applied Physics, Xi'an Jiaotong University City College, Xi'an 710018, China}

\author{Vei Wang (王伟)}
\email{wangvei@icloud.com}
\affiliation{Department of Electronic Engineering, Xi'an University of Technology, Xi'an 710048, China}
\affiliation{Interdisciplinary Research Center for Semiconductor and Artificial Intelligence, Xi'an University of Technology, Xi'an 710048, China}

\date{\today}

\begin{abstract}
Conventional approaches classify symmetry-allowed magnetic configurations but do not by themselves establish a direct, mode-resolved correspondence with parent-lattice vibrations. Here, we formulate a universal determinant-induced pseudoscalar twist for all 32 crystallographic point groups, establishing an exact representation-to-geometry correspondence between parent vibrations and magnetic order. Within the paramagnetic gray group $G\times\Theta_{\mathcal{T}}$, the spatial twist determines symmetry-defined magnetic geometry, while time-reversal parity independently specifies magnetic character. Each parent phonon irrep $\Gamma$ maps to $\Gamma_{\mathrm{mag}}=\Gamma\otimes\Gamma_{\mathrm{ps}}$, preserving multiplicities and yielding the projection identity $P_{\mathrm{mag},mn}^{(\Gamma\otimes\Gamma_{\mathrm{ps}})}=P_{\mathrm{ph},mn}^{(\Gamma)}$ under the common Cartesian realization. This establishes the \textit{Template Principle}: parent vibrational modes furnish real-space templates whose symmetry-enforced nodal manifolds are inherited exactly. Applied to monolayer $\mathrm{Cd}_2\mathrm{N}_3$, the framework identifies the ferrimagnetic ground state from the parent $A_{2u}$ sector, confirmed by first-principles calculations, alongside cluster magnetic octupoles and antiferromagnetic manifolds. It further provides an \textit{a priori} parent-group criterion for screening symmetry-allowed linear magnetic responses.
\end{abstract}

\keywords{pseudoscalar representation; parent-group symmetry; magnetic representation analysis; symmetry-adapted magnetic structures; lattice-vibrational templates}

\maketitle
\end{CJK*}

% --- Section I: Introduction ---
\section{Introduction}
\label{sec:intro}

A crystalline parent phase possesses a complete set of symmetry-organized lattice-vibrational normal modes before any magnetic or structural phase transition occurs. Magnetic ordering, by contrast, is often discussed in terms of physically realized spin configurations obtained through experiment, energetic searches, or thermodynamic selection, even though the symmetry-allowed magnetic representation space can already be formulated from the parent crystal. This creates a methodological asymmetry: while structural distortions are naturally associated with physically realizable parent-lattice modes, the corresponding real-space realizations of symmetry-adapted magnetic order are usually not directly connected to those pre-existing vibrational degrees of freedom. This raises a fundamental physical question: \textit{does the configuration space of symmetry-adapted magnetic order already reside within the spatial mode structure encoded in the parent crystal?}

The physical origin of this polar--axial distinction lies in the different transformation properties of vector fields: atomic displacements transform as polar vectors, whereas magnetic moments transform as axial vectors under spatial point-group operations. Furthermore, structural vibrations are time-reversal even ($\mathcal{T}\mathbf{u}=\mathbf{u}$), whereas magnetic order parameters are time-reversal odd ($\mathcal{T}\mathbf{M}=-\mathbf{M}$). In the classical macroscopic regime considered here, the full symmetry of the paramagnetic parent phase is described by the gray magnetic parent group $\mathcal{G}_{\mathrm{P}}=G\times\Theta_{\mathcal{T}}$, where $G$ is the unitary spatial crystallographic point group and $\Theta_{\mathcal{T}}=\{E,\mathcal{T}\}$ is the time-reversal group. This direct-product structure permits a separation of spatial transformation character from time-reversal parity: the spatial group $G$ determines the geometric symmetry of a magnetic configuration, while $\Theta_{\mathcal{T}}$ provides an independent time-reversal grading, with magnetic order carrying $\tau_{\mathcal{T}}=-1$. The polar--axial distinction therefore suggests a universal spatial mapping between parent-lattice modes and magnetic configurations rather than an independent search over unconstrained spin patterns.

Existing group-theoretical methodologies---including Bertaut's magnetic representation analysis and isotropy-subgroup theory \cite{Bertaut1962,Bertaut1963,Bertaut1968,Bertaut1971,Bradley2010, Stokes1988_Isotropy}---rigorously determine symmetry-allowed magnetic representation spaces, symmetry-adapted basis functions, and order-parameter isotropy branches from parent groups and Wyckoff positions. The unresolved issue addressed here is therefore not the abstract classification of magnetic representations, but their mode-resolved physical realization within the parent lattice: is there a direct correspondence between a parent-lattice vibrational mode and the spatial configuration of the corresponding symmetry-adapted magnetic order parameter?

In this work, we resolve this issue by formulating a universal determinant-induced pseudoscalar twist that acts involutively on the irreducible-representation landscape of all 32 crystallographic point groups, promoting the pseudoscalar from a vector-level polar--axial parity factor to a structural organizing principle across crystallographic representation spaces. At the single-irrep level, each parent-lattice vibrational sector $\Gamma$ maps to the symmetry-adapted magnetic sector $\Gamma_{\mathrm{mag}}=\Gamma\otimes\Gamma_{\mathrm{ps}}$, extending for the complete crystal representation to $\Gamma_{\mathrm{mag}}^{\mathrm{tot}} = \Gamma_{\mathrm{ph}}^{\mathrm{tot}} \otimes\Gamma_{\mathrm{ps}}$. At the realization level, this twist strictly preserves representation multiplicities ($n_{\mathrm{mag}}^{(\Gamma\otimes\Gamma_{\mathrm{ps}})} = n_{\mathrm{ph}}^{(\Gamma)}$) and, crucially, renders the corresponding polar and axial matrix-element projection operators identical under the common site-resolved Cartesian coefficient space ($P_{\mathrm{mag},mn}^{(\Gamma\otimes\Gamma_{\mathrm{ps}})} = P_{\mathrm{ph},mn}^{(\Gamma)}$). This exact representation-to-geometry correspondence establishes the \textit{Template Principle}: parent-phase vibrational normal modes furnish concrete, symmetry-defined real-space templates for symmetry-adapted magnetic order parameters, with symmetry-enforced nodal constraints strictly preserved. The key conceptual distinction is therefore between the parent-lattice realization of the possibility space and the thermodynamic selection of a physical state: representation theory defines the symmetry-allowed configuration space, while our mapping provides its concrete parent-lattice templates, upon which microscopic energetics determines the realized ground state. This establishes a clear physical progression from \textit{Possibility} to \textit{Selection}, and ultimately to \textit{Response}.

We demonstrate the concrete physical realization of this framework in monolayer $\mathrm{Cd}_2\mathrm{N}_3$. The parent-group vibrational mapping identifies the symmetry channel and real-space configuration of the $A_{2g}$ out-of-plane ferrimagnetic ground state (originating from the parent $A_{2u}$ phonon sector) predicted by first-principles calculations, while revealing $B_{1g}$ and $B_{2g}$ cluster magnetic octupoles alongside an $E_{2g}$ in-plane antiferromagnetic configuration sector directly from parent-lattice kinematics. Beyond magnetic-structure construction, the same spatial-temporal separation principle provides a general parent-group criterion for symmetry-allowed linear magnetic responses. The present work therefore organizes the symmetry-allowed magnetic possibilities prior to energetic selection and provides the algebraic foundation for companion studies of linear magnetoelectricity and piezomagnetism.

% --- Section II: Pseudoscalar Mapping Theory ---
\section{Pseudoscalar Mapping Theory}
\label{sec:pseudoscalar_mapping}

For a given crystalline material, the full paramagnetic parent symmetry at the point-group level is described by the gray magnetic point group $\mathcal{G}_{\mathrm{P}} = G \times \Theta_{\mathcal{T}}$, where $G$ is the unitary crystallographic spatial parent group and $\Theta_{\mathcal{T}} = \{E,\mathcal{T}\}$ is the independent time-reversal group. For the classical macroscopic magnetic structures considered here, spatial transformation properties and time-reversal parity are treated as independent symmetry labels. Accordingly, the present formulation operates on the spatial sector $G$ to establish the universal polar-to-axial geometric correspondence through a pseudoscalar representation $\Gamma_{\mathrm{ps}}$, while the time-reversal-odd magnetic character is assigned independently. Spinorial double-group representations and the full Wigner corepresentation formalism are therefore outside the scope of the present classical treatment.

The core of the framework is the universal determinant-induced pseudoscalar twist acting on the spatial representation landscape. For any crystallographic point group $G$, $\Gamma_{\mathrm{ps}}$ is the canonical one-dimensional sign representation whose character is
\begin{equation}
    \chi_{\mathrm{ps}}(g)=\det(R_g),
\end{equation}
where $R_g\in O(3)$ is the orthogonal matrix representing the spatial operation $g\in G$ in real space. Since every orthogonal transformation satisfies $\det(R_g)=\pm1$, $\Gamma_{\mathrm{ps}}$ is real and self-inverse,
\begin{equation}
    \Gamma_{\mathrm{ps}}\otimes\Gamma_{\mathrm{ps}}=\Gamma_1.
\end{equation}
Physically, $\Gamma_{\mathrm{ps}}$ distinguishes orientation-preserving from orientation-reversing spatial operations. In this work, we promote the pseudoscalar from a vector-level polar--axial parity factor to an involutive twist acting on the entire irreducible-representation landscape of the spatial parent group; its time-reversal character is treated independently in Sec.~\ref{sec:time_reversal}.

Before addressing collective lattice modes, we establish the fundamental polar--axial duality at the single-vector level. Under a point-group operation $g$, a polar vector $\mathbf{u}$, such as an atomic displacement, transforms according to the natural representation $D^{(p)}(g)=R_g$, whereas an axial vector $\mathbf{m}$, such as a magnetic moment, transforms as $D^{(a)}(g)=\det(R_g)R_g$ \cite{Landau8}. In representation-theoretic language, this duality is expressed as
\begin{equation}
    \boxed{
    \Gamma_{\mathrm{axial}}
    =
    \Gamma_{\mathrm{polar}}\otimes\Gamma_{\mathrm{ps}}
    }.
    \label{eq:vec_dual}
\end{equation}
Thus, $\Gamma_{\mathrm{ps}}$ provides the canonical one-dimensional geometric bridge that converts a polar-vector representation into its axial counterpart.

We next extend this vector duality to an individual symmetry sector of the parent lattice dynamics. Any symmetry-adapted phonon sector transforming according to an irreducible representation $\Gamma$ of $G$ maps to a corresponding symmetry-adapted magnetic sector
\begin{equation}
    \boxed{
    \Gamma_{\mathrm{mag}}
    =
    \Gamma\otimes\Gamma_{\mathrm{ps}}
    }.
    \label{eq:irrep_map}
\end{equation}
This one-to-one mapping assigns to every spatial vibrational irreducible representation a unique symmetry-adapted magnetic partner. A particular phonon realization within the $\Gamma$ sector then provides a corresponding real-space magnetic template within the mapped sector $\Gamma_{\mathrm{mag}}$. The complete crystal representation mapping and its consequences for representation multiplicities and real-space projection operators are derived in Sec.~\ref{sec:rep_analysis}.

The action of $\Gamma_{\mathrm{ps}}$ naturally divides the 32 crystallographic point groups into three classes, summarized in Table~\ref{tab:table1}.

\textit{Class I: Chiral.---} In chiral point groups containing only proper rotations, $\det(R_g)=+1$ for all $g\in G$. Consequently, $\Gamma_{\mathrm{ps}}=\Gamma_1$, and the polar and axial sectors are symmetry-equivalent within $G$.

\textit{Class II: Centrosymmetric.---} In centrosymmetric point groups containing spatial inversion, $\chi_{\mathrm{ps}}(i)=-1$, so tensoring with $\Gamma_{\mathrm{ps}}$ reverses inversion parity, $g\leftrightarrow u$.

\textit{Class III: Achiral non-centrosymmetric.---} In non-centrosymmetric point groups containing improper operations, $\Gamma_{\mathrm{ps}}$ is nontrivial and records the determinant signatures of mirrors and roto-inversions. The resulting pseudoscalar twist changes the irrep labels of the symmetry-adapted magnetic sectors and thereby distinguishes spatial symmetry channels relevant to noncollinear and antiferromagnetic configurations.

Crucially, because all crystallographic spatial operations are orthogonal, $\chi_{\mathrm{ps}}(g)\in\{+1,-1\}$, and the mapping is involutive:
\begin{equation}
    \Gamma\otimes\Gamma_{\mathrm{ps}}\otimes\Gamma_{\mathrm{ps}}
    =
    \Gamma.
\end{equation}
Thus, applying the pseudoscalar twist twice restores the original representation. This algebraic involution provides the basis for the exact equivalence of the polar and axial projection operators established in Sec.~\ref{sec:rep_analysis}. In categorical language, this involutive structure is realized as an autoequivalence of $\mathrm{Rep}(G)$ (Appendix~\ref{subsec:rep_category}).

\begin{table*}[t]
\centering
\small
\caption{Pseudoscalar representations $\Gamma_{\mathrm{ps}}$ for all 32 crystallographic point groups, classified by the determinant of symmetry operations.
\label{tab:table1}}
\renewcommand{\arraystretch}{1.12}
\setlength{\tabcolsep}{6pt}
\begin{tabular}{llll}
\toprule
\textbf{Category} & \textbf{International Symbol (H--M)} &
\textbf{Schoenflies Symbol} & $\boldsymbol{\Gamma_{\mathrm{ps}}}$ \\
\midrule
\multirow{2}{*}{\textbf{1. Chiral Point Groups}}
& $1, 2, 3, 4, 6, 222, 23$
& $C_1, C_2, C_3, C_4, C_6, D_2, T$
& $A$ \\
& $422, 32, 622, 432$
& $D_4, D_3, D_6, O$
& $A_1$ \\
\midrule
\multirow{2}{*}{\textbf{2. Centrosymmetric Point Groups}}
& $\bar{1}, 2/m, 4/m, 6/m, mmm, m\bar{3}, \bar{3}$
& $C_i, C_{2h}, C_{4h}, C_{6h}, D_{2h}, T_h, S_6$
& $A_u$ \\
& $\bar{3}m, 4/mmm, 6/mmm, m\bar{3}m$
& $D_{3d}, D_{4h}, D_{6h}, O_h$
& $A_{1u}$ \\
\midrule
\multirow{4}{*}{\textbf{3. Achiral Point Groups}}
& $m, \bar{6}$
& $C_h(C_s), C_{3h}$
& $A^{\prime\prime}$ \\
& $\bar{6}m2$
& $D_{3h}$
& $A_1^{\prime\prime}$ \\
& $2mm, 3m, 4mm, 6mm, \bar{4}3m$
& $C_{2v}, C_{3v}, C_{4v}, C_{6v}, T_d$
& $A_2$ \\
& $\bar{4}, \bar{4}2m$
& $S_4, D_{2d}$
& $B \,/\, B_1$ \\
\bottomrule
\end{tabular}
\vspace{0.5em}
\begin{minipage}{0.95\textwidth}
\small
\textbf{Note:}
For all crystallographic point groups, $\Gamma_{\mathrm{ps}}$ is the one-dimensional sign representation associated with the orthogonal spatial action, $\chi_{\mathrm{ps}}(g)=\det(R_g)$.
\end{minipage}
\end{table*}

% --- Section III: Representation Analysis and Template Principle ---
\section{Representation Analysis and Template Principle}
\label{sec:rep_analysis}

Having established the pseudoscalar mapping for single vectors and individual irreducible representations, we now establish its consequences for the full parent-lattice representation: the total magnetic representation mapping, the exact preservation of irreducible-representation multiplicities, and the equivalence of the corresponding symmetry-projected configuration subspaces. This formulation connects the pseudoscalar mapping to the classical representation analysis of magnetic structures pioneered by Bertaut \cite{Bertaut1968,Bertaut1971}. Here, magnetic moments are treated as classical axial vectors, a framework well suited to symmetry-based descriptions of macroscopic magnetic structures, while quantum-fluctuation-dominated regimes require additional microscopic treatments.

\subsection{Total Representation Mapping and Multiplicity Preservation}
\label{subsec:tot_rep_mapping}

For a crystal lattice containing $N$ atoms, the total vibrational representation $\Gamma_{\mathrm{ph}}^{\mathrm{tot}}$ is constructed from the atomic site-permutation representation $\Gamma_{\mathrm{eq}}$ and the polar-vector representation $\Gamma_{\mathrm{polar}}$ \cite{Dresselhaus2008}:
\begin{equation}
    \Gamma_{\mathrm{ph}}^{\mathrm{tot}}
    =
    \Gamma_{\mathrm{eq}}\otimes\Gamma_{\mathrm{polar}}
    =
    \bigoplus_{\Gamma}
    n_{\mathrm{ph}}^{(\Gamma)}\,\Gamma,
    \label{eq:tot_ph}
\end{equation}
where $\Gamma$ denotes an irreducible representation of the parent group $G$. Its multiplicity in the parent-lattice vibrational space is
\begin{equation}
    n_{\mathrm{ph}}^{(\Gamma)}
    =
    \frac{1}{|G|}
    \sum_{g\in G}
    \chi^{(\Gamma)}(g)^*
    \chi_{\mathrm{ph}}^{\mathrm{tot}}(g).
    \label{eq:n_ph}
\end{equation}

In parallel, the total magnetic representation $\Gamma_{\mathrm{mag}}^{\mathrm{tot}}$ describes the collective transformation of localized magnetic moments over the same crystallographic sites and is constructed from the site-permutation representation and the axial-vector representation \cite{Bertaut1968}:
\begin{equation}
    \Gamma_{\mathrm{mag}}^{\mathrm{tot}}
    =
    \Gamma_{\mathrm{eq}}\otimes\Gamma_{\mathrm{axial}}.
    \label{eq:tot_mag_def}
\end{equation}
Using the polar--axial relation $\Gamma_{\mathrm{axial}} = \Gamma_{\mathrm{polar}}\otimes\Gamma_{\mathrm{ps}}$ and the associativity of tensor products, we obtain the global mapping theorem for the complete parent-lattice configuration space:
\begin{equation}
    \boxed{
    \Gamma_{\mathrm{mag}}^{\mathrm{tot}}
     =
    \Gamma_{\mathrm{ph}}^{\mathrm{tot}}\otimes\Gamma_{\mathrm{ps}}
    }.
    \label{eq:tot_mag_map}
\end{equation}
Consequently, $\chi_{\mathrm{mag}}^{\mathrm{tot}}(g) = \chi_{\mathrm{ph}}^{\mathrm{tot}}(g) \chi_{\mathrm{ps}}(g)$.

Decomposing the magnetic representation into irreducible representations,
\begin{equation}
    \Gamma_{\mathrm{mag}}^{\mathrm{tot}}
    =
    \bigoplus_{\Gamma_{\mathrm{mag}}}
    n_{\mathrm{mag}}^{(\Gamma_{\mathrm{mag}})}
    \,\Gamma_{\mathrm{mag}},
\end{equation}
the multiplicity of the mapped irrep $\Gamma_{\mathrm{mag}}=\Gamma\otimes\Gamma_{\mathrm{ps}}$ is
\begin{equation}
\begin{aligned}
    n_{\mathrm{mag}}^{(\Gamma\otimes\Gamma_{\mathrm{ps}})}
    &=
    \frac{1}{|G|}
    \sum_{g\in G}
    \left[
    \chi^{(\Gamma)}(g)
    \chi_{\mathrm{ps}}(g)
    \right]^*
    \left[
    \chi_{\mathrm{ph}}^{\mathrm{tot}}(g)
    \chi_{\mathrm{ps}}(g)
    \right].
\end{aligned}
\end{equation}
Because $\chi_{\mathrm{ps}}(g)=\det(R_g)=\pm1$ is real and satisfies $[\chi_{\mathrm{ps}}(g)]^2=1$ for every $g\in G$, the pseudoscalar factors cancel identically. The result reduces exactly to Eq.~\eqref{eq:n_ph}, giving
\begin{equation}
    \boxed{
    n_{\mathrm{mag}}^{(\Gamma\otimes\Gamma_{\mathrm{ps}})}
    =
    n_{\mathrm{ph}}^{(\Gamma)}
    }.
    \label{eq:template1}
\end{equation}

Thus the pseudoscalar twist changes the spatial representation label according to $\Gamma\mapsto\Gamma\otimes\Gamma_{\mathrm{ps}}$ while strictly preserving the occurrence multiplicity. Since the dimension of an isotypic sector is $n_\Gamma d_\Gamma$, the corresponding polar and axial sectors also have identical dimensions.

\subsection{Projection Operator Equivalence and the Template Principle}
\label{subsec:proj_op_equiv}

Multiplicity preservation establishes equality of sector dimensions, but the central geometric statement of the framework follows from the corresponding projection operators. Consider first a polar displacement field. The matrix-element projector onto the $mn$ component of the irrep $\Gamma$ is
\begin{equation}
    P_{\mathrm{ph},mn}^{(\Gamma)}
    =
    \frac{d_\Gamma}{|G|}
    \sum_{g\in G}
    D_{mn}^{(\Gamma)}(g)^*
    \hat D_{\mathrm{pol}}(g),
\end{equation}
where $d_\Gamma=\dim\Gamma$, $D_{mn}^{(\Gamma)}(g)$ is the irrep matrix element, and $\hat D_{\mathrm{pol}}(g)$ is the spatial action on the site-resolved polar coefficient space.

For the corresponding axial magnetic realization $\Gamma_{\mathrm{mag}}=\Gamma\otimes\Gamma_{\mathrm{ps}}$, $D_{mn}^{(\Gamma\otimes\Gamma_{\mathrm{ps}})}(g) = D_{mn}^{(\Gamma)}(g)\chi_{\mathrm{ps}}(g)$, while $\hat D_{\mathrm{ax}}(g) = \chi_{\mathrm{ps}}(g)\hat D_{\mathrm{pol}}(g)$. Hence,
\begin{align}
P_{\mathrm{mag},mn}^{(\Gamma\otimes\Gamma_{\mathrm{ps}})}
&=
\frac{d_\Gamma}{|G|}
\sum_{g\in G}
\left[
D_{mn}^{(\Gamma)}(g)\chi_{\mathrm{ps}}(g)
\right]^*
\left[
\chi_{\mathrm{ps}}(g)\hat D_{\mathrm{pol}}(g)
\right]
\nonumber\\
&=
\frac{d_\Gamma}{|G|}
\sum_{g\in G}
D_{mn}^{(\Gamma)}(g)^*
[\chi_{\mathrm{ps}}(g)]^2
\hat D_{\mathrm{pol}}(g).
\end{align}
Since $[\chi_{\mathrm{ps}}(g)]^2=1$, the pseudoscalar factors cancel element by element, giving the exact operator identity:
\begin{equation}
    \boxed{
    P_{\mathrm{mag},mn}^{(\Gamma\otimes\Gamma_{\mathrm{ps}})}
    =
    P_{\mathrm{ph},mn}^{(\Gamma)}
    }.
    \label{eq:template2}
\end{equation}
Here and below, the equality is understood under the natural identification of the common site-resolved Cartesian coefficient space. Thus the two operators have identical action on the underlying Cartesian coefficient vectors, although they carry different $G$-representation labels.

Summing the diagonal matrix-element projectors, $\sum_m P_{mm}^{(\Gamma)}$, gives the corresponding character projector:
\begin{equation}
    \boxed{
    P_{\mathrm{mag}}^{(\Gamma\otimes\Gamma_{\mathrm{ps}})}
    =
    P_{\mathrm{ph}}^{(\Gamma)}
    }.
    \label{eq:char_proj}
\end{equation}

For a one-dimensional irrep occurring with unit multiplicity in the relevant magnetic configuration space, $d_\Gamma=n_\Gamma=1$, the corresponding symmetry sector is one-dimensional. Since $D_{11}^{(\Gamma)}=\chi^{(\Gamma)}$, the character projector is then sufficient to extract the unique symmetry-adapted basis mode from any seed vector with nonzero projection onto that sector. The resulting normalized vector is fixed uniquely up to an overall amplitude and sign.

More generally, the operator identity in Eq.~\eqref{eq:char_proj} establishes equality of the corresponding projected subspaces:
\begin{equation}
    \boxed{
    \mathrm{Im}
    \left(
    P_{\mathrm{mag}}^{(\Gamma\otimes\Gamma_{\mathrm{ps}})}
    \right)
    =
    \mathrm{Im}
    \left(
    P_{\mathrm{ph}}^{(\Gamma)}
    \right)
    }.
    \label{eq:isotypic_iso}
\end{equation}
The equality here refers to the common Cartesian realization specified above; abstractly, the corresponding isotypic representation spaces are naturally isomorphic, $\mathrm{Im}(P_{\mathrm{mag}}^{(\Gamma\otimes\Gamma_{\mathrm{ps}})}) \cong \mathrm{Im}(P_{\mathrm{ph}}^{(\Gamma)})$. Equation~\eqref{eq:isotypic_iso} gives the geometric content of the \textit{Template Principle}: parent-phase vibrational normal modes and the corresponding symmetry-adapted magnetic configurations occupy the same projected real-space coefficient subspace under the natural polar--axial identification.

Together with the site-stabilizer covariance established in Appendix~\ref{subsec:projection_nodal_app}, this equivalence implies that all symmetry-enforced nodal constraints---including real-space positions and local Cartesian components forced to vanish by local stabilizer symmetry---are strictly preserved under the pseudoscalar mapping. Accidental, material-dependent zeros are not constrained by this statement. Thus, parent-phase vibrational normal modes provide faithful real-space templates for symmetry-adapted magnetic order parameters.

\subsection{Magnetic-Sublattice Representation and Phonon-Derived Templates}
\label{subsec:sublattice}

In practical crystalline materials, magnetic moments typically reside on a specific subset of Wyckoff positions (the magnetic sublattice $\mathcal{S}_{\mathrm{mag}}$). To treat site selectivity rigorously, our framework decouples the formal representation-theoretic classification from the real-space physical extraction of vibrational templates: \textit{the sublattice-restricted representation determines the complete symmetry-allowed magnetic configuration space, whereas the magnetic-sublattice projection of first-principles phonon eigenvectors provides concrete crystallographic realizations within that space.}

\textit{1. Sublattice-Restricted Representation Space.---} At the formal group-theoretic level, provided that $\mathcal{S}_{\mathrm{mag}}$ is an invariant union of parent-group site orbits, the complete magnetic configuration space allowed on the magnetic sublattice is obtained by restricting the site-permutation representation $\Gamma_{\mathrm{eq}}$ to $\mathcal{S}_{\mathrm{mag}}$:
\begin{equation}
    \Gamma_{\mathrm{mag}}^{\text{(sub)}} 
    = \left( \Gamma_{\mathrm{eq}}\big\vert_{\mathrm{mag}} \right) \otimes \Gamma_{\mathrm{axial}}
    = \left( \Gamma_{\mathrm{eq}}\big\vert_{\mathrm{mag}} \right) \otimes \Gamma_{\mathrm{polar}} \otimes \Gamma_{\mathrm{ps}},
    \label{eq:sub_mag}
\end{equation}
where $\Gamma_{\mathrm{eq}}\big\vert_{\mathrm{mag}}$ spans the permutation space of the magnetic sublattice only. Equation~\eqref{eq:sub_mag} therefore defines the complete symmetry-allowed magnetic configuration space on the magnetic sublattice, with irreducible-representation content
\begin{equation}
\Gamma_{\mathrm{mag}}^{\mathrm{(sub)}} =
\bigoplus_{\Gamma} n_{\Gamma}^{(\mathrm{sub})}
\left(\Gamma\otimes\Gamma_{\mathrm{ps}}\right).
 \label{eq:sub_mag_decomp} 
\end{equation}
The corresponding symmetry-adapted magnetic basis modes can then be constructed directly within this subspace by applying the matrix-element projection operators $P_{\mathrm{mag},mn}^{(\Gamma\otimes\Gamma_{\mathrm{ps}})}$ to suitable seed configurations on the magnetic sublattice. This representation-theoretic route is formally complete and does not require a lattice-dynamical calculation.

\textit{2. Sublattice Restriction and Phonon-Derived Magnetic Templates.---} A complementary physical route is provided by the parent-lattice vibrational eigenmodes: rather than constructing the magnetic basis abstractly from the restricted magnetic representation, one can extract concrete real-space magnetic templates by restricting full-cell phonon eigenvectors to the magnetic sublattice. When concrete lattice dynamical calculations are available, physical templates are extracted by applying a site-restriction map $Q_{\mathrm{mag}}: V_{\mathrm{full}} \to V_{\mathrm{mag}}$ that retains displacement vectors on $\mathcal{S}_{\mathrm{mag}}$ while omitting non-magnetic ligand coordinates, defining the restricted polar pattern:
\begin{equation}
    \lvert \boldsymbol{\epsilon}_{\nu}^{\mathrm{sub}} \rangle = Q_{\mathrm{mag}} \lvert \boldsymbol{\epsilon}_{\nu}^{\mathrm{ph}} \rangle.
\end{equation}
The resulting restricted polar pattern $\lvert \boldsymbol{\epsilon}_\nu^{\mathrm{sub}} \rangle$ is directly reinterpreted as an axial magnetic template $\lvert \widetilde{\mathbf{M}}_\nu \rangle \equiv Q_{\mathrm{mag}} \lvert \boldsymbol{\epsilon}_\nu^{\mathrm{ph}} \rangle$, with the polar-to-axial distinction encoded entirely in the transformed group action $D_{\mathrm{ax}}(g) = \chi_{\mathrm{ps}}(g) D_{\mathrm{pol}}(g)$ under the common underlying site-resolved Cartesian coefficient space. Provided that the magnetic sublattice consists of complete parent-group orbits (i.e., is an invariant site set satisfying $g\mathcal{S}_{\mathrm{mag}} = \mathcal{S}_{\mathrm{mag}}$ for all $g \in G$), the restriction map satisfies the intertwining relation $Q_{\mathrm{mag}} D_{\mathrm{full}}(g) = D_{\mathrm{mag}}(g) Q_{\mathrm{mag}}$. Consequently, provided $Q_{\mathrm{mag}} \lvert \psi_{\Gamma} \rangle \neq \mathbf{0}$, the projected vector strictly preserves its parent-group irrep, $Q_{\mathrm{mag}} \lvert \psi_{\Gamma} \rangle \in V_{\mathrm{mag}}^{(\Gamma)}$.

\textit{3. Conceptual Hierarchy, Multiplicity Collapse, and Optical Constraints.---} Crucially, the operators $Q_{\mathrm{mag}}$, $P_{\mathrm{mag}, mn}^{(\Gamma)}$, and the pseudoscalar twist $\mathcal{F}_{\mathrm{ps}}$ act at three distinct conceptual levels: site restriction, irrep projection, and representation twisting, respectively. Specifically, $Q_{\mathrm{mag}}$ is a spatial site-restriction map and does not belong to the representation twist $\mathcal{F}_{\mathrm{ps}}$. 

While $Q_{\mathrm{mag}}: V_{\mathrm{full}} \to V_{\mathrm{mag}}$ is the canonical coordinate projection from the full ambient coordinate space $V_{\mathrm{full}}$ onto $V_{\mathrm{mag}}$, it is not injective. When restricted to a specific phonon-mode subspace $W_{\mathrm{ph}} \subset V_{\mathrm{full}}$, the restricted map $Q_{\mathrm{mag}}\big\vert_{W_{\mathrm{ph}}}$ is generally neither injective nor surjective onto $V_{\mathrm{mag}}$. Within hybridized multi-dimensional or multi-copy phonon sectors ($n_\Gamma > 1$), distinct full-cell vibrational eigenvectors can project onto collinear or identical magnetic templates on $\mathcal{S}_{\mathrm{mag}}$ (loss of multiplicity under site restriction).

Furthermore, full-cell $\Gamma$-point optical phonons are constrained to lie in the subspace orthogonal to rigid acoustic translations, satisfying the mass-weighted center-of-mass condition $\sum_{i \in \text{all}} m_i \mathbf{u}_i = \mathbf{0}$. This global lattice constraint is generally not preserved under sublattice restriction ($Q_{\mathrm{mag}} V_{\mathrm{opt}}^{\mathrm{full}} \subseteq V_{\mathrm{mag}}$, but $Q_{\mathrm{mag}} V_{\mathrm{opt}}^{\mathrm{full}} \neq V_{\mathrm{mag}}$ in general). Therefore, $\Gamma_{\mathrm{mag}}^{\mathrm{(sub)}}$ is a magnetic configuration space rather than an optical phonon space. Removing non-magnetic sites can also alter the conventional physical labeling of a projected pattern: a full-cell optical pattern with opposite displacements on two sublattices may project onto a single magnetic sublattice and appear as a uniform ferromagnetic-like configuration. Irrep symmetry is strictly preserved, but magnetic-order nomenclature (FM/FiM/AFM) is not invariant under site restriction. In this extraction procedure, non-magnetic ligand coordinates are omitted solely to isolate the local spin-moment geometry, without implying that ligand-mediated superexchange or dynamic spin-lattice pathways are negligible.

\subsection{Symmetry-Adapted Magnetic Basis and Order-Parameter Selection}
\label{subsec:decoupling}

To distinguish lattice-dynamical realizations from symmetry-adapted magnetic configurations, our framework clearly separates two distinct classes of mathematical objects:
\begin{enumerate}[(i)]
    \item \textbf{Phonon dynamical eigenvectors $\lvert \boldsymbol{\epsilon}_\nu^{\mathrm{ph}} \rangle$:} Genuine dynamical eigenvectors obtained by diagonalizing the parent-phase lattice dynamical matrix $D_{\mathrm{ph}}\lvert \boldsymbol{\epsilon}_\nu^{\mathrm{ph}} \rangle = \omega_\nu^2 \lvert \boldsymbol{\epsilon}_\nu^{\mathrm{ph}} \rangle$. For an $N$-atom parent unit cell, the $\Gamma$-point displacement space has $3N$ normal-mode eigenvectors, which are organized into symmetry sectors according to the irreducible representations of $G$. Multidimensional irreps correspond to symmetry-related degenerate eigenvectors within a given phonon sector. Within a degenerate phonon eigenspace, individual eigenvectors are basis-dependent; the symmetry-defined representation sector, rather than a particular choice of basis within that degenerate subspace, is invariant.
    \item \textbf{Symmetry-adapted magnetic basis modes $\lvert \phi_a^{(\Gamma_{\mathrm{mag}})} \rangle$:} Pure basis configurations constructed from the magnetic representation $\Gamma_{\mathrm{mag}}^{\mathrm{(sub)}}$ via projection operators $P_{\mathrm{mag}, mn}^{(\Gamma)}$ spanning the corresponding symmetry sector $V_{\mathrm{mag}}^{(\Gamma_{\mathrm{mag}})}$. These modes represent symmetry-defined basis vectors of the magnetic configuration space rather than quantum eigenstates of a dynamic Hamiltonian.
\end{enumerate}

Crucially, for a nonzero order parameter belonging to a non-degenerate one-dimensional representation occurring with unit multiplicity ($d_\Gamma = 1, n_\Gamma = 1$) on the magnetic sublattice, the representation subspace is strictly one-dimensional ($\dim V_\Gamma = n_\Gamma d_\Gamma = 1$). Consequently, spatial symmetry completely and uniquely fixes the relative site-resolved spin pattern up to an overall amplitude scale and global sign, leaving the symmetry-adapted basis mode $\lvert \phi^{(\Gamma_{\mathrm{mag}})} \rangle$ as the unique allowed geometric template.

For multi-dimensional irreps ($d_\Gamma > 1$) and/or multiple occurrences of the same irrep ($n_\Gamma > 1$), phonon-derived magnetic templates $\lvert \widetilde{\mathbf{M}}_\nu \rangle = Q_{\mathrm{mag}} \lvert \boldsymbol{\epsilon}_\nu^{\mathrm{ph}} \rangle$ provide physically resolved vectors (seeds) within the symmetry-allowed subspace, while the complete symmetry-adapted basis $\{\lvert \phi_a^{(\Gamma_{\mathrm{mag}})} \rangle\}_{a=1}^{n_\Gamma d_\Gamma}$ is constructed independently from the formal representation space $\Gamma_{\mathrm{mag}}^{\mathrm{(sub)}}$. Within multidimensional or multiply occurring sectors, phonon eigenvectors provide concrete physical realizations within the symmetry-allowed magnetic representation space, but symmetry alone does not determine the eventual order-parameter direction or energetic preference among their linear combinations.

In general, an arbitrary magnetic configuration within an irrep sector is expressed as a linear superposition:
\begin{equation}
    \lvert \mathbf{M} \rangle = \sum_a \eta_a \lvert \phi_a^{(\Gamma_{\mathrm{mag}})} \rangle,
\end{equation}
where $\boldsymbol{\eta} = (\eta_1, \eta_2, \dots)$ specifies the order-parameter direction (OPD). Parent-group representation theory determines the complete symmetry-allowed magnetic configuration space (\textit{Possibility}); parent-phase phonon eigenvectors provide concrete real-space template realizations within that space; magnetic energetics and thermodynamic free-energy minimization subsequently determine which order-parameter direction and state are realized (\textit{Selection}).

\subsection{Algebraic Protocol for Tensor Decomposition}
\hypertarget{subsec:tensor_decomposition}{} 
\label{subsec:tensor_decomposition}

Beyond vector fields and vibrational modes, the pseudoscalar mapping provides a compact algebraic bookkeeping scheme for parent-group response screening, decomposing macroscopic physical response tensors into polar and axial building blocks. An axial vector transforms as $\Gamma_{\mathrm{ax}} = \Gamma_{\mathrm{ps}} \otimes \Gamma_{\mathrm{pol}}$, where $\Gamma_{\mathrm{pol}}$ is the polar-vector representation. Consequently, a rank-$n$ tensor $T$ containing $n_a$ axial indices and $n_p = n - n_a$ polar indices transforms at the representation level as
\begin{equation}
\Gamma_T
=
\bigotimes_{i=1}^{n_p}\Gamma_{\mathrm{pol}}
\otimes
\bigotimes_{j=1}^{n_a}\left(\Gamma_{\mathrm{ps}}\otimes\Gamma_{\mathrm{pol}}\right)
=
\Gamma_{\mathrm{pol}}^{\otimes n}\otimes \Gamma_{\mathrm{ps}}^{\otimes n_a}.
\end{equation}
Since $\Gamma_{\mathrm{ps}}$ is one-dimensional and involutive ($\Gamma_{\mathrm{ps}}^{\otimes 2} = \Gamma_1$), the pseudoscalar factor reduces to a $\mathbb{Z}_2$ parity label dictated by $n_a \bmod 2$:
\begin{equation}
\Gamma_{\mathrm{ps}}^{\otimes n_a} = 
\begin{cases}
\Gamma_1, & n_a \text{ even}, \\
\Gamma_{\mathrm{ps}}, & n_a \text{ odd}.
\end{cases}
\end{equation}
Thus, tensors with an even number of axial indices transform as ordinary polar tensors, whereas those with an odd number transform as pseudotensors. This representation-theoretic bookkeeping isolates the intrinsic parity structure of physical tensors across arbitrary ranks, supplying the exact selection rules for response screening in Sec.~\ref{sec:discussion}.

% --- Section IV: Time-Reversal Symmetry and Magnetic Subgroups ---
\section{Time-Reversal Symmetry and Magnetic Subgroups}
\label{sec:time_reversal}

\subsection{Time-Reversal Parity, Gray Parent Group, and Spatial-Temporal Separation Principle}
\label{subsec:tr_parity}

In the high-temperature paramagnetic parent phase considered here, the full macroscopic symmetry is formally described by the gray magnetic parent group $\mathcal{G}_{\mathrm{P}} = G \times \Theta_{\mathcal{T}} = G \cup \mathcal{T}G$, where $G$ is the unitary spatial crystallographic point group and $\Theta_{\mathcal{T}} = \{E, \mathcal{T}\}$ is the independent time-reversal group. A fundamental conceptual cornerstone of our framework is the \textit{spatial-temporal separation principle} rooted in this direct-product structure: \textbf{the spatial pseudoscalar twist strictly governs the spatial geometric configuration of the magnetic order, while time-reversal parity ($\tau_{\mathcal{T}} = -1$) independently specifies its magnetic character.}

Physically, a polar displacement field $\mathbf{u}(\mathbf{r})$ and an axial magnetic structure $\mathbf{M}(\mathbf{r})$ differ both geometrically and temporally. For any classical physical quantity or field $X$, the transformation under time reversal is formally defined by
\begin{equation}
    \mathcal{T}X = \tau_{\mathcal{T}}(X)X, \quad \tau_{\mathcal{T}}(X) \in \{+1, -1\},
\end{equation}
where $\tau_{\mathcal{T}}(X) = +1$ and $\tau_{\mathcal{T}}(X) = -1$ explicitly denote time-reversal even and time-reversal odd parities, respectively. While $\mathbf{u}(\mathbf{r})$ is time-even ($\tau_{\mathcal{T}}(\mathbf{u})=+1$), $\mathbf{M}(\mathbf{r})$ is time-odd ($\tau_{\mathcal{T}}(\mathbf{M})=-1$). Thus, within the present classical setting, a magnetic order parameter is characterized by the decoupled symmetry labels $(\Gamma_{\mathrm{spatial}}, \tau_{\mathcal{T}}) = (\Gamma \otimes \Gamma_{\mathrm{ps}}, -1)$. Within this direct-product decomposition $\mathcal{G}_{\mathrm{P}} = G \times \Theta_{\mathcal{T}}$, our mapping $\Gamma_{\mathrm{mag}}=\Gamma\otimes\Gamma_{\mathrm{ps}}$ operates exclusively within the spatial crystallographic parent group $G$ to execute the universal polar-to-axial geometric conversion, fixing the spatial geometric template. The time-reversal oddness ($\tau_{\mathcal{T}} = -1$) is applied as an independent physical label that defines the magnetic nature of the order parameter.

This separation enables a rigorous, constructive definition of a \textit{symmetry-matched magnetic structure}: it is a time-reversal-odd order parameter ($\tau_{\mathcal{T}} = -1$) whose spatial transformation law is compatible with, and directly constructed from, an irreducible representation of the spatial parent crystallographic group $G$. At the full gray-group level, this combined spatial and temporal transformation can be embedded consistently, in the real classical setting considered here, via the factorized time-odd pseudoscalar twist $\widetilde{\Gamma}_{\mathrm{ps}}^- = \Gamma_{\mathrm{ps}} \boxtimes \mathbf{1}^-$. The constructive template derivations below are formulated within the spatial representation category; the corresponding gray-group corepresentation lift is established in Appendix~\ref{subsec:corep_compat}. By explicitly restricting our operational scope to the classical macroscopic limit without invoking spinorial double-group representations or the full quantum-mechanical Wigner corepresentation formalism, this spatial-temporal separation principle ensures that parent-group phonon irreps serve as rigorous, energy-independent spatial templates for emergent magnetic phases.

\subsection{Emergent Magnetic Groups for One-Dimensional Irreps}
\label{subsec:emergent_mpg}

A major practical advantage of this template decoupling is that the emergent magnetic point group (Shubnikov group) $M_{\mathrm{sub}}$ of the ordered phase can be deduced directly from the spatial template. In the Landau theory of continuous phase transitions \cite{Landau_StatPhys1}, the unitary subgroup $H \subset M_{\mathrm{sub}}$ corresponds to the classical isotropy subgroup preserving the condensed order parameter, while the full magnetic point group $M_{\mathrm{sub}}$ incorporates antiunitary elements dictated by time-reversal parity.

For a magnetic structure $\mathbf{M}(\mathbf{r})$ transforming according to a one-dimensional irrep $\Gamma_{\mathrm{mag}} = \Gamma \otimes \Gamma_{\mathrm{ps}}$, any spatial symmetry operation $R\in G$ acts as
\begin{equation}
    R\mathbf{M}(\mathbf{r})=\lambda_R\mathbf{M}(\mathbf{r}), \quad \lambda_R=\chi^{(\Gamma_{\mathrm{mag}})}(R)=\pm 1.
\end{equation}
The emergent magnetic subgroup $M_{\mathrm{sub}}$ is constructed via two canonical rules:
\begin{enumerate}[(i)]
    \item \textbf{Unitary elements:} If a spatial operation $R$ preserves the magnetic template ($\lambda_R=+1$), $R$ remains a unitary symmetry element of $M_{\mathrm{sub}}$.
    \item \textbf{Antiunitary elements:} If $R$ reverses the magnetic configuration ($\lambda_R=-1$), pure spatial symmetry is broken. However, because $\mathcal{T}\mathbf{M}=-\mathbf{M}$, the combined operation $\mathcal{T}R$ restores the structure and enters $M_{\mathrm{sub}}$ as an antiunitary element.
\end{enumerate}

Consequently, for one-dimensional representations, the emergent magnetic point group is explicitly determined by
\begin{equation}
    M_{\mathrm{sub}}=\{R\in G\mid \lambda_R=+1\}\cup\{\mathcal{T}R\mid R\in G,\,\lambda_R=-1\}.
    \label{eq:M_sub}
\end{equation}
Eq.~\eqref{eq:M_sub} generates emergent magnetic point groups that are mathematically consistent with the magnetic subgroups derived from general order-parameter symmetry breaking \cite{Landau_StatPhys1, Michel1980_RMP}.

\subsection{Branch Selection, Order-Parameter Directions, and Epikernels}
\label{subsec:branch_selection}

While non-degenerate one-dimensional irreps with unit multiplicity ($d_\Gamma=1, n_\Gamma=1$) uniquely fix the relative site-resolved spin pattern up to an overall amplitude scale and global sign, a distinct complexity arises for multi-dimensional or multi-copy representation sectors. In this case, the symmetry-adapted basis modes $\mathbf{M}_a(\mathbf{r})$ span a multi-dimensional subspace. The actual condensed magnetic order parameter is a linear combination $\mathbf{M}(\mathbf{r}) = \sum_a \eta_a \mathbf{M}_a(\mathbf{r})$, where $\boldsymbol{\eta} = (\eta_1, \eta_2, \dots)$ defines the order-parameter direction (OPD). Higher-order thermodynamic invariants \cite{Toledano1987} and magnetic anisotropies pin a specific OPD---a physical process known as branch selection.

Once a specific OPD $\boldsymbol{\eta}$ is selected inside the degenerate representation space, the corresponding magnetic subgroup is obtained by identifying spatial operations that leave $\boldsymbol{\eta}$ invariant or reverse it. Define the unitary stabilizer subgroup $H_{\boldsymbol{\eta}}=\{g\in G\mid D(g)\boldsymbol{\eta}=\boldsymbol{\eta}\}$ and the time-reversal-compensating reversing set $A_{\boldsymbol{\eta}}=\{g\in G\mid D(g)\boldsymbol{\eta}=-\boldsymbol{\eta}\}$. The full Shubnikov magnetic point group associated with this specific OPD is then
\begin{equation}
    M_{\mathrm{sub}}(\boldsymbol{\eta})=H_{\boldsymbol{\eta}}\cup\mathcal{T}A_{\boldsymbol{\eta}}.
    \label{eq:M_sub_general}
\end{equation}
For multidimensional order parameters, the remaining elements $g \in G \setminus (H_{\boldsymbol{\eta}} \cup A_{\boldsymbol{\eta}})$ map $\boldsymbol{\eta}$ to symmetry-distinct order-parameter directions (corresponding to different magnetic domains) and therefore do not belong to the magnetic point group of the single domain.

In the language of group-theoretical phase transitions, special high-symmetry OPDs (such as Paths B1 and B2 for $E$-type modes) correspond to \textit{maximal isotropy subgroups} (termed epikernels in the ISOTROPY formalism) \cite{Michel1980_RMP, Stokes1988_Isotropy}. It is essential to clarify the conceptual distinction: basis vectors span the representation subspace, order-parameter directions select specific orientations within that subspace, and epikernels are the maximal residual symmetry subgroups associated with high-symmetry OPDs.

While classical group-theoretical enumeration schemes determine these epikernels purely as abstract algebraic sublattices, our parent-group mapping provides physically resolved real-space realizations of the representation spaces underlying epikernel selection. This four-stage workflow---representation transport ($\Gamma \otimes \Gamma_{\mathrm{ps}} \to \Gamma_{\mathrm{mag}}$), real-space template projection ($P_{\mathrm{mag}, mn}^{(\Gamma \otimes \Gamma_{\mathrm{ps}})} = P_{\mathrm{ph}, mn}^{(\Gamma)}$), physical template extraction ($Q_{\mathrm{mag}}\boldsymbol{\epsilon}_\nu^{\mathrm{ph}}$), and isotropy branch selection---provides a unified, mode-resolved framework for organizing and identifying emergent magnetic symmetries directly from parent-phase lattice vibrations.

To demonstrate the practical execution of this framework and illustrate how parent-group representations systematically construct candidate magnetic structures, we apply our methodology to monolayer $\mathrm{Cd}_2\mathrm{N}_3$. As detailed below, this symmetry mapping not only identifies the symmetry channel and real-space configuration of the lowest-energy magnetic ground state but also uncovers hidden multipolar and non-collinear magnetic manifolds directly from parent lattice kinematics.

% --- Section V: Case Study ---
\section{Symmetry Analysis and Magnetic Templates of \texorpdfstring{Cd$_2$N$_3$}{Cd2N3} Monolayer}
\label{sec:case_study}

The interplay between lattice dynamics and electronic spin degrees of freedom is fundamentally governed by crystalline symmetry. For monolayer Cd$_2$N$_3$ \cite{Zhang2021-Cd2N3}, the real-space atomic configuration and reciprocal lattice are depicted in Fig.~\ref{UCBZ}. Structural stability is confirmed by the phonon dispersion in Fig.~\ref{Cd2N3-phonon-band}, showing no imaginary frequencies across the entire Brillouin zone.

\begin{figure}[htbp]
\centering
\includegraphics[scale=0.40]{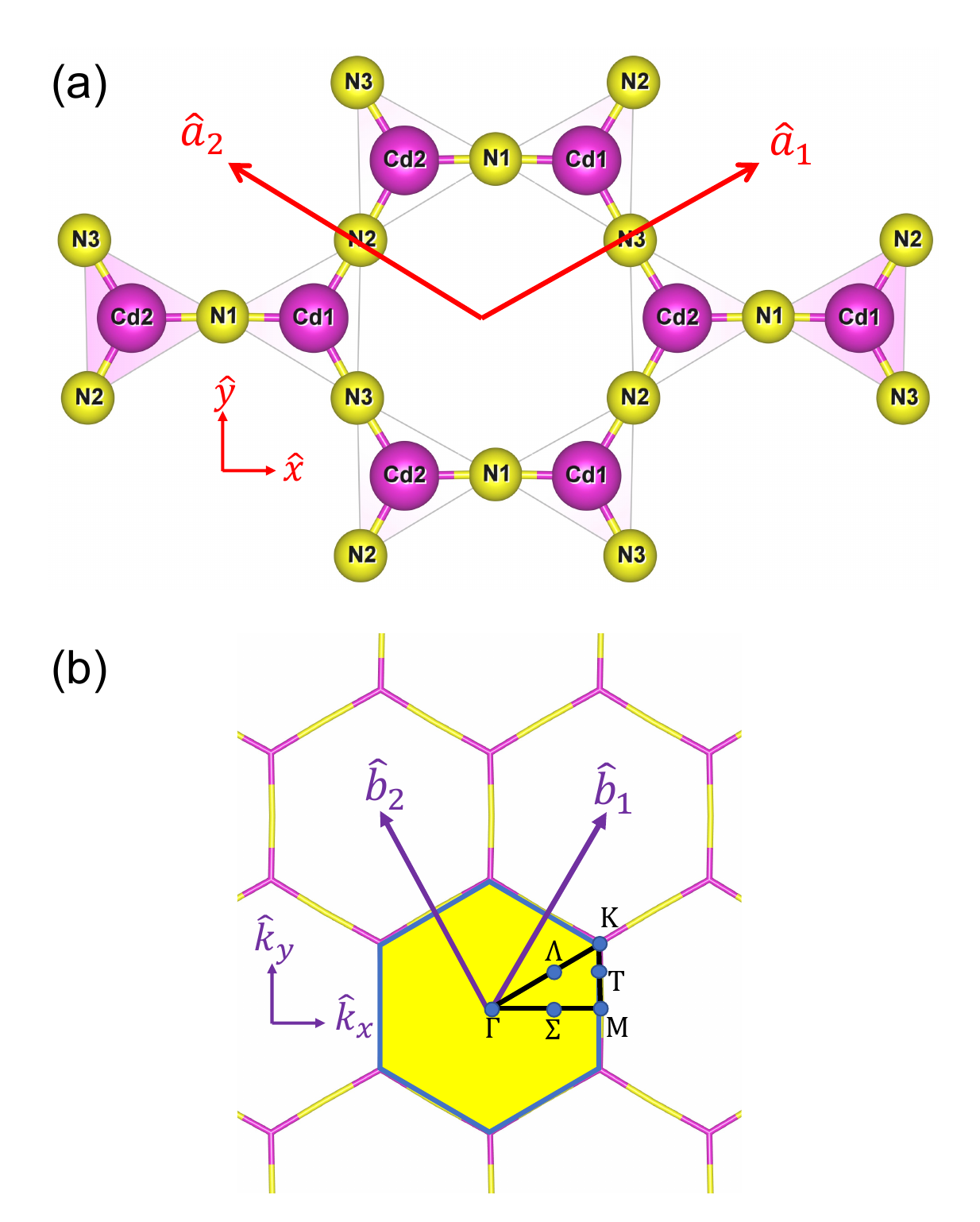}
\caption{\label{UCBZ} Real (a) and reciprocal (b) lattices for two-dimensional Cd$_2$N$_3$. The lattice vectors for real space and the two nonequivalent Cd atoms and three N atoms are indicated in (a); the lattice vectors for reciprocal space and high-symmetry points in the first Brillouin zone are indicated in (b).}
\end{figure}

\begin{figure}[htbp]
\centering
\includegraphics[scale=0.54]{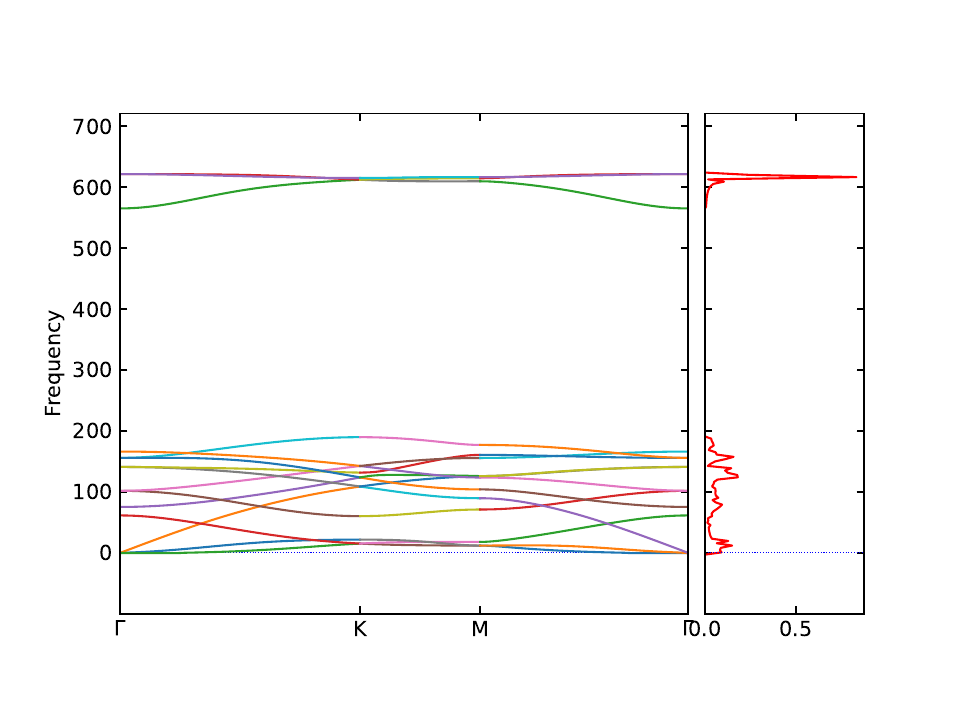}
\caption{\label{Cd2N3-phonon-band} Dispersion relation and density of states of monolayer Cd$_2$N$_3$ in units of cm$^{-1}$.}
\end{figure}

\begin{table}[htbp]
\centering
\begin{ruledtabular}
\caption{\label{Reduced Representation-D6h} Characters of vector, equivalent, and vibration representations for Cd$_2$N$_3$ as well as characters for Cd and N atoms on Wyckoff sites $2c$ and $3f$.}
\begin{tabular}{ccccccccccccc}
 $D_{6h}$ &$E$ &$ 2C_{6}$&$ 2C_{3}$&$C_{2}$&$3C'_{2}$ &$3C''_{2}$&$i$ &$2S_{3}$ &$2S_{6}$ &$\sigma_{h}$ &$3\sigma_{v}$&$3\sigma_{d}$\\
 \hline
 $\chi_{\rm{2Cd}}(2c)$ & 2 & 0 & 2 & 0 & 2 & 0 & 0 & 2 & 0 & 2 & 2 & 0 \\
 $\chi_{\rm{3N}}(3f)$ & 3 & 0 & 0 & 3 & 1 & 1 & 3 & 0 & 0 & 3 & 1 & 1 \\
 $\chi_{\rm{equ}}$ & 5 & 0 & 2 & 3 & 3 & 1 & 3 & 2 & 0 & 5 & 3 & 1 \\
 $\chi_{\rm{vec}}$ & 3 & 2 & 0 & -1 & -1 & -1 & -3 & -2 & 0 & 1 & 1 & 1 \\
 $\chi_{\rm{lat.vib.}}$ & 15 & 0 & 0 & -3 & -3 & -1 & -9 & -4 & 0 & 5 & 3 & 1 \\
\end{tabular}
\end{ruledtabular}
\end{table}

\subsection{Phonon Decomposition, Sublattice Resolution, and Pseudoscalar Mapping}
\label{subsec:phonon_decomp}

The structural symmetry of monolayer Cd$_2$N$_3$ belongs to space group $P6/mmm$ (No. 191) with $D_{6h}$ point group symmetry at $\Gamma$. Two Cd atoms occupy $2c$ Wyckoff sites, and three N atoms reside at $3f$ sites. To resolve the atomic origins and spatial directional components of the vibrational modes, we decompose the lattice vibration representation $\Gamma_{\rm{lat.vib.}}$ into irreps of $D_{6h}$ using the character atlas in Table~\ref{Reduced Representation-D6h}:
\begin{align}
\Gamma_{\rm{polar}} &= A_{2u}(z) \oplus E_{1u}(x,y), \\
\Gamma_{\rm{eq}} &= \Gamma_{\rm{2Cd}} \oplus \Gamma_{\rm{3N}} = (A_{1g} \oplus B_{1u}) \oplus (A_{1g} \oplus E_{2g}).
\end{align}
The 15-dimensional $\Gamma$-point vibrational representation corresponds to the $3N=15$ Cartesian vibrational degrees of freedom of the five-atom unit cell, equivalently represented by 15 phonon eigenvectors of the $\Gamma$-point dynamical matrix:
\begin{align}
\Gamma_{\mathrm{ph}}^{\mathrm{tot}} &= 2A_{2u} \oplus B_{1u} \oplus B_{2u} \oplus B_{2g} \oplus 3E_{1u} \oplus E_{2u} \oplus E_{2g},
\end{align}
partitioned into acoustic ($\Gamma_{\rm{aco}} = A_{2u} \oplus E_{1u}$) and optical ($\Gamma_{\rm{opt}} = A_{2u} \oplus 2E_{1u} \oplus B_{1u} \oplus B_{2u} \oplus B_{2g} \oplus E_{2u} \oplus E_{2g}$) branches. 

To establish the site- and direction-specific atomic drivers for each mode, we project the displacement fields onto individual sublattices and directional components:
\begin{align}
\Gamma_{\rm{2Cd}}^{z} &= A_{2u} \oplus B_{2g}, \quad \Gamma_{\rm{2Cd}}^{(x,y)} = E_{1u} \oplus E_{2g}, \\
\Gamma_{\rm{3N}}^{z} &= A_{2u} \oplus E_{2u}, \quad \Gamma_{\rm{3N}}^{(x,y)} = 2E_{1u} \oplus B_{1u} \oplus B_{2u}.
\end{align}
These sublattice projections provide the explicit kinematic foundation for explaining why specific magnetic modes are localized on distinct atomic species and spatial orientations.

For monolayer $\mathrm{Cd}_2\mathrm{N}_3$, first-principles magnetic calculations show that both Cd and N sites carry nonzero magnetic moments, although the Cd moments are comparatively weak. Thus, the magnetic sublattice coincides with the full crystallographic basis ($\mathcal{S}_{\mathrm{mag}} = \mathcal{S}_{\mathrm{full}}$), and the site-restriction map $Q_{\mathrm{mag}}$ defined in Sec.~\ref{subsec:sublattice} reduces to the identity operator on the full site-coordinate space ($Q_{\mathrm{mag}} = I$). Consequently, the full-cell representation $\Gamma_{\mathrm{mag}}^{\mathrm{tot}}$ coincides with the sublattice magnetic representation $\Gamma_{\mathrm{mag}}^{\mathrm{(sub)}}$ [Eq.~\eqref{eq:sub_mag}], and the phonon-derived templates retain the complete site-resolved information of the parent vibrational eigenvectors without multiplicity collapse.

Within point group $D_{6h}$, the pseudoscalar representation is uniquely identified as $\Gamma_{\mathrm{ps}} = A_{1u}$ (Table~\ref{tab:table1}). Applying the single-irrep mapping $\Gamma_{\text{mag}} = \Gamma \otimes A_{1u}$ [Eq.~\eqref{eq:irrep_map}] and the total representation mapping theorem $\Gamma_{\mathrm{mag}}^{\mathrm{tot}} = \Gamma_{\mathrm{ph}}^{\mathrm{tot}} \otimes A_{1u}$ [Eq.~\eqref{eq:tot_mag_map}], the 15-dimensional vibrational space maps directly onto the 15-dimensional magnetic configuration space ($\dim\Gamma_{\mathrm{mag}}^{\mathrm{tot}}=\dim\Gamma_{\mathrm{ph}}^{\mathrm{tot}}=15$):
\begin{align}
\Gamma_{\mathrm{mag}}^{\mathrm{tot}} &= 2A_{2g} \oplus B_{1g} \oplus B_{2g} \oplus B_{2u} \oplus 3E_{1g} \oplus E_{2g} \oplus E_{2u}.
\end{align}
In strict accordance with the multiplicity preservation theorem [Eq.~\eqref{eq:template1}], the dimension of each symmetry sector is preserved.

To analyze and categorize these mapped states, Table~\ref{D6h_mapping_dictionary} presents the universal augmented symmetry-mapping dictionary for the $D_{6h}$ point group, systematically compiling for every irrep its pseudoscalar partner $\Gamma_{\mathrm{mag}}$, the resulting Shubnikov magnetic point group (MPG), and the path-dependent splitting into maximal isotropy subgroups (epikernels).

Applying this universal mapping atlas to monolayer $\mathrm{Cd}_2\mathrm{N}_3$ generates the concrete, mode-resolved magnetic classification summarized in Table~\ref{Table-Phon-Mag}. For each mode, Table~\ref{Table-Phon-Mag} explicitly details the structural driver derived from sublattice projections, order-parameter direction, symmetry-adapted basis vector, Shubnikov magnetic point group, and the resulting magnetic order type. The corresponding real-space spin vector configurations and residual magnetic symmetries are depicted in Fig.~\ref{fig:stability_multipoles} (for 1D and acoustic modes) and Fig.~\ref{fig:2D_manifolds_vertical} (for 2D optical manifolds), forming a complete real-space template atlas.

% ------------------- Table III  ----------------
\begin{table*}[t]
\centering
\renewcommand{\arraystretch}{1.25}
\caption{Augmented symmetry-mapping dictionary for the point group $D_{6h}$ ($6/mmm$). The pseudoscalar basis is defined as $\mathcal{P}\equiv\det(\mathbf{R})$.
For two-dimensional irreducible representations (e.g., $E_{1g}$ and $E_{2u}$), the mapping generates split maximal isotropy magnetic subgroups (epikernels) corresponding to high-symmetry order-parameter directions (Paths B1 and B2).
Magnetic point groups (MPGs) are listed in both Schoenflies and Hermann--Mauguin (H--M) notations. To preserve the correspondence between symmetry elements of the parent group $D_{6h}=6/m_{h}m_{v}m_{d}$ and its isotropy subgroups, the H--M symbols of the resulting $D_{2h}$ magnetic subgroups follow the same mirror-plane ordering convention $(m_{h},m_{v},m_{d})$ as adopted for the parent group. Type-I magnetic point groups are marked with an asterisk (\text{*}).}
\label{D6h_mapping_dictionary}
\begin{tabular}{
l cccccccccccc 
l l 
c 
>{\centering\arraybackslash}p{2.5cm} 
>{\centering\arraybackslash}p{2.5cm}
}
\toprule
$D_{6h}$ & $E$ & $2C_6$ & $2C_3$ & $C_2$ & $3C_2'$ & $3C_2''$ & $i$ & $2S_3$ & $2S_6$ & $\sigma_h$ & $3\sigma_v$ & $3\sigma_d$ 
& \textbf{Basis} & $\bm{\Gamma}_{\mathrm{mag}}$ & \textbf{Path} 
& \textbf{MPG (S)} & \textbf{MPG (H-M)} \\
\midrule
$A_{1g}$ & 1 & 1 & 1 & 1 & 1 & 1 & 1 & 1 & 1 & 1 & 1 & 1 & $x^2+y^2, z^2$ & $A_{1u}$ & (1) & $D_{6h}(D_6)$ & $6/m'm'm'$ \\
$A_{2g}$ & 1 & 1 & 1 & 1 & -1 & -1 & 1 & 1 & 1 & 1 & -1 & -1 & $R_z$ & $A_{2u}$ & (1) & $D_{6h}(C_{6v})$ & $6/m'mm$ \\
$B_{1g}$ & 1 & -1 & 1 & -1 & 1 & -1 & 1 & -1 & 1 & -1 & 1 & -1 & - & $B_{1u}$ & (1) & $D_{6h}(D_{3h})$ & $6'/mmm'$ \\
$B_{2g}$ & 1 & -1 & 1 & -1 & -1 & 1 & 1 & -1 & 1 & -1 & -1 & 1 & - & $B_{2u}$ & (1) & $D_{6h}(D_{3h})$ & $6'/mm'm$ \\
$E_{1g}$ & 2 & 1 & -1 & -2 & 0 & 0 & 2 & 1 & -1 & -2 & 0 & 0 & $(R_x, R_y)$ & $E_{1u}$ & \makecell{Path B1 \\ Path B2} & \makecell{$D_{2h}(C_{2v})$ \\ $D_{2h}(C_{2v})$} & \makecell{$mm'm$ \\ $mmm'$} \\
\rowcolor{gray!10}
$E_{2g}$ & 2 & -1 & -1 & 2 & 0 & 0 & 2 & -1 & -1 & 2 & 0 & 0 & $(x^2-y^2, xy)$ & $E_{2u}$ & \makecell{Path B1 \\ Path B2} & \makecell{$D_{2h}(C_{2v})$ \\ $D_{2h}(D_{2})$} & \makecell{$m'mm$ \\ $m'm'm'$} \\
\midrule
\rowcolor{blue!10}
$\mathbf{A_{1u}}$ & \textbf{1} & \textbf{1} & \textbf{1} & \textbf{1} & \textbf{1} & \textbf{1} & \textbf{-1} & \textbf{-1} & \textbf{-1} & \textbf{-1} & \textbf{-1} & \textbf{-1} & $\mathcal{P}$ & $A_{1g}$ & (1) & $D_{6h}$ & $6/mmm^*$ \\
$A_{2u}$ & 1 & 1 & 1 & 1 & -1 & -1 & -1 & -1 & -1 & -1 & 1 & 1 & $z$ & $A_{2g}$ & (1) & $D_{6h}(C_{6h})$ & $6/mm'm'$ \\
$B_{1u}$ & 1 & -1 & 1 & -1 & 1 & -1 & -1 & 1 & -1 & 1 & -1 & 1 & - & $B_{1g}$ & (1) & $D_{6h}(D_{3d})$ & $6'/m'mm'$ \\
$B_{2u}$ & 1 & -1 & 1 & -1 & -1 & 1 & -1 & 1 & -1 & 1 & 1 & -1 & - & $B_{2g}$ & (1) & $D_{6h}(D_{3d})$ & $6'/m'm'm$ \\
$E_{1u}$ & 2 & 1 & -1 & -2 & 0 & 0 & -2 & -1 & 1 & 2 & 0 & 0 & $(x, y)$ & $E_{1g}$ & \makecell{Path B1 \\ Path B2} & \makecell{$D_{2h}(C_{2h})$ \\ $D_{2h}(C_{2h})$} & \makecell{$m'mm'$ \\ $m'm'm$} \\
$E_{2u}$ & 2 & -1 & -1 & 2 & 0 & 0 & -2 & 1 & 1 & -2 & 0 & 0 & - & $E_{2g}$ & \makecell{Path B1 \\ Path B2} & \makecell{$D_{2h}(C_{2h})$ \\ $D_{2h}$} & \makecell{$mm'm'$ \\ $mmm^*$} \\
\bottomrule
\end{tabular}
\end{table*}

% ----------- Table IV --------------------

\begin{table*}[t]
\centering
\caption{\label{Table-Phon-Mag} Mode-resolved classification of the 15-dimensional $\Gamma$-point vibrational representation and its corresponding symmetry-adapted magnetic sectors for $\mathrm{Cd}_2\mathrm{N}_3$. Abbreviations: Aco and Opt denote acoustic and optical branches. Path denotes the chosen high-symmetry order-parameter direction (OPD) within the representation manifold. HM Symbol: Hermann-Mauguin notation for Shubnikov magnetic point group. Type-I groups are marked with an asterisk (\text{*}).}
\begin{tabular}{lcccccc}
\toprule
Category & $\Gamma_{\text{ph}}$ & Path & Symmetry-Adapted Magnetic Basis & Symmetry Path $G  \xrightarrow{\Gamma_{\text{mag}}}  G'(H)$ & HM Symbol & Mag. Order \\
\midrule
Aco & $A_{2u} $ & $(1)$ & Cd($z$), N($z$) & $D_{6h} \xrightarrow{A_{2g}} D_{6h}(C_{6h})$ & $6/mm'm'$ & FM-$z$ \\
\multirow{2}{*}{Aco} & \multirow{2}{*}{$E_{1u} $} & Path B1 & Cd, N ($x,y$) & $D_{6h} \xrightarrow{E_{1g}} D_{2h}(C_{2h})$ & $m'mm'$ & FM-$xy$ \\
 & & Path B2 & Cd, N ($x,y$) & $D_{6h} \xrightarrow{E_{1g}} D_{2h}(C_{2h})$ & $m'm'm$ & FM-$xy$ \\
\midrule
Opt & $A_{2u} $ & $(1)$ & Cd($z$), N($-z$) & $D_{6h} \xrightarrow{A_{2g}} D_{6h}(C_{6h})$ & $6/mm'm'$ & FiM-$z$ \\
\multirow{2}{*}{Opt} & \multirow{2}{*}{$2E_{1u} $} & Path B1 & Cd, N ($x,y$) & $D_{6h} \xrightarrow{E_{1g}} D_{2h}(C_{2h})$ & $m'mm'$ & FiM-$xy$ \\
 & & Path B2 & Cd, N ($x,y$) & $D_{6h} \xrightarrow{E_{1g}} D_{2h}(C_{2h})$ & $m'm'm$ & FiM-$xy$ \\
Opt & $B_{1u}$ & $(1)$ & N (Radial)  & $D_{6h} \xrightarrow{B_{1g}} D_{6h}(D_{3d})$ & $6'/m'mm'$ & NC-AFM \\
Opt & $B_{2u}$ & $(1)$ & N (Tangential) & $D_{6h} \xrightarrow{B_{2g}} D_{6h}(D_{3d})$ & $6'/m'm'm$ & NC-AFM \\
Opt & $B_{2g}$ & $(1)$ & Cd(Staggered $z$) & $D_{6h} \xrightarrow{B_{2u}} D_{6h}(D_{3h})$ & $6'/mm'm$ & AFM-$z$ \\
\multirow{2}{*}{Opt} & \multirow{2}{*}{$E_{2u} $} &  Path B1 & N (Staggered $z$) & $D_{6h} \xrightarrow{E_{2g}} D_{2h}(C_{2h})$ & $mm'm'$ & AFM-$z$ \\
 & &  Path B2 & N (Staggered $z$) & $D_{6h} \xrightarrow{E_{2g}} D_{2h}$ & $mmm$\textsuperscript{\textbf{*}} & AFM-$z$ \\
\multirow{2}{*}{Opt} & \multirow{2}{*}{$E_{2g} $} & Path B1 & Cd ($x,y$) & $D_{6h} \xrightarrow{E_{2u}} D_{2h}(C_{2v})$ & $m'mm$ & AFM-$y$ \\
 & & Path B2 & Cd ($x,y$) & $D_{6h} \xrightarrow{E_{2u}} D_{2h}(D_{2})$ & $m'm'm'$ & AFM-$x$ \\
 \bottomrule
\end{tabular}
\end{table*}

\subsection{Physical Interpretation of Acoustic and Optical Branches}
\label{subsec:acoustic_optical}

In the specific crystallographic realization of monolayer $\mathrm{Cd}_2\mathrm{N}_3$, Table~\ref{Table-Phon-Mag} reveals a clear correspondence between vibrational branch origins and the resulting magnetic configurations:
\begin{itemize}
    \item \textbf{Acoustic Modes ($\Gamma_{\text{aco}} = A_{2u} \oplus E_{1u}$):} Involve rigid, in-phase translations of all atoms across the unit cell, mapping exclusively to uniform Ferromagnetic (FM) alignments [out-of-plane FM-$z$ in Fig.~\ref{fig:stability_multipoles}(a) and in-plane FM-$xy$ in Fig.~\ref{fig:stability_multipoles}(c, d)].
    \item \textbf{Optical Modes ($\Gamma_{\text{opt}}$):} Involve relative sublattice displacements and generate nonuniform magnetic templates; the mapped configurations in this material realize Ferrimagnetic (FiM), Antiferromagnetic (AFM), and non-collinear cluster multipolar textures [Fig.~\ref{fig:stability_multipoles}(b, e, f) and Fig.~\ref{fig:2D_manifolds_vertical}].
\end{itemize}
This acoustic/optical correspondence represents a material-specific realization of the present monolayer system and is not a general theorem that optical modes must always generate antiferromagnetic or ferrimagnetic order.

\subsection{One-Dimensional Modes: Uniquely Locked Spin Configurations}
\label{subsec:1d_modes}

For a nonzero order parameter belonging to a non-degenerate one-dimensional representation occurring with unit multiplicity ($d_\Gamma = 1, n_\Gamma = 1$) on the magnetic sublattice, spatial symmetry completely and uniquely fixes the relative spin directions and sign patterns across sublattices up to an overall amplitude scale and global sign [Sec.~\ref{subsec:decoupling}]. When multiple copies or distinct Wyckoff orbits are present (such as the $2A_{2g}$ sector in $\mathrm{Cd}_2\mathrm{N}_3$), spatial symmetry uniquely locks the intrasublattice spin alignment, while the parent phonon normal modes provide the physically resolved orthogonal basis vectors resolving the intersublattice sign channels:

\begin{itemize}
    \item \textbf{Out-of-Plane $A_{2u} \to A_{2g}$ Channels:} The acoustic $A_{2u}$ mode maps to uniform FM-$z$ [Fig.~\ref{fig:stability_multipoles}(a)]. In contrast, the optical $A_{2u}$ mode involves anti-phase Cd--N out-of-plane motion, mapping uniquely to an uncompensated collinear ferrimagnetic state (FiM-$z$) with magnetic point group $D_{6h}(C_{6h})$ ($6/mm'm'$) [Fig.~\ref{fig:stability_multipoles}(b)]. The symmetry mapping identifies this symmetry-compatible configuration, and first-principles calculations subsequently confirm that this FiM-$z$ state is the realized magnetic ground state of monolayer Cd$_2$N$_3$.
    \item \textbf{Cluster Magnetic Octupoles ($B_{1u}$ and $B_{2u}$):} The optical $B_{1u}$ driver maps to $B_{1g}$ [Fig.~\ref{fig:stability_multipoles}(e)], generating a radial all-in/all-out spin configuration on N atoms with $D_{6h}(D_{3d})$ ($6'/m'mm'$) symmetry. The $B_{2u}$ driver maps to $B_{2g}$ [Fig.~\ref{fig:stability_multipoles}(f)], producing a tangential vortex-like spin pattern on N atoms with $D_{6h}(D_{3d})$ ($6'/m'm'm$) symmetry. Both constitute rank-3 even-parity cluster magnetic octupoles. The strict radial and tangential orientation constraints on N ($3f$) sites are direct physical manifestations of the site-stabilizer covariance condition [Eq.~\eqref{eq:nodal_cond} in Appendix~\ref{subsec:projection_nodal_app}].
    \item \textbf{Staggered Out-of-Plane AFM ($B_{2g}$):} The optical $B_{2g}$ driver maps to $B_{2u}$ (Table~\ref{Table-Phon-Mag}), yielding a staggered out-of-plane AFM-$z$ configuration on Cd sites with $D_{6h}(D_{3h})$ ($6'/mm'm$) symmetry.
\end{itemize}

\begin{figure}[t]
\centering
\begin{minipage}{0.48\columnwidth}
  \centering \includegraphics[width=\textwidth]{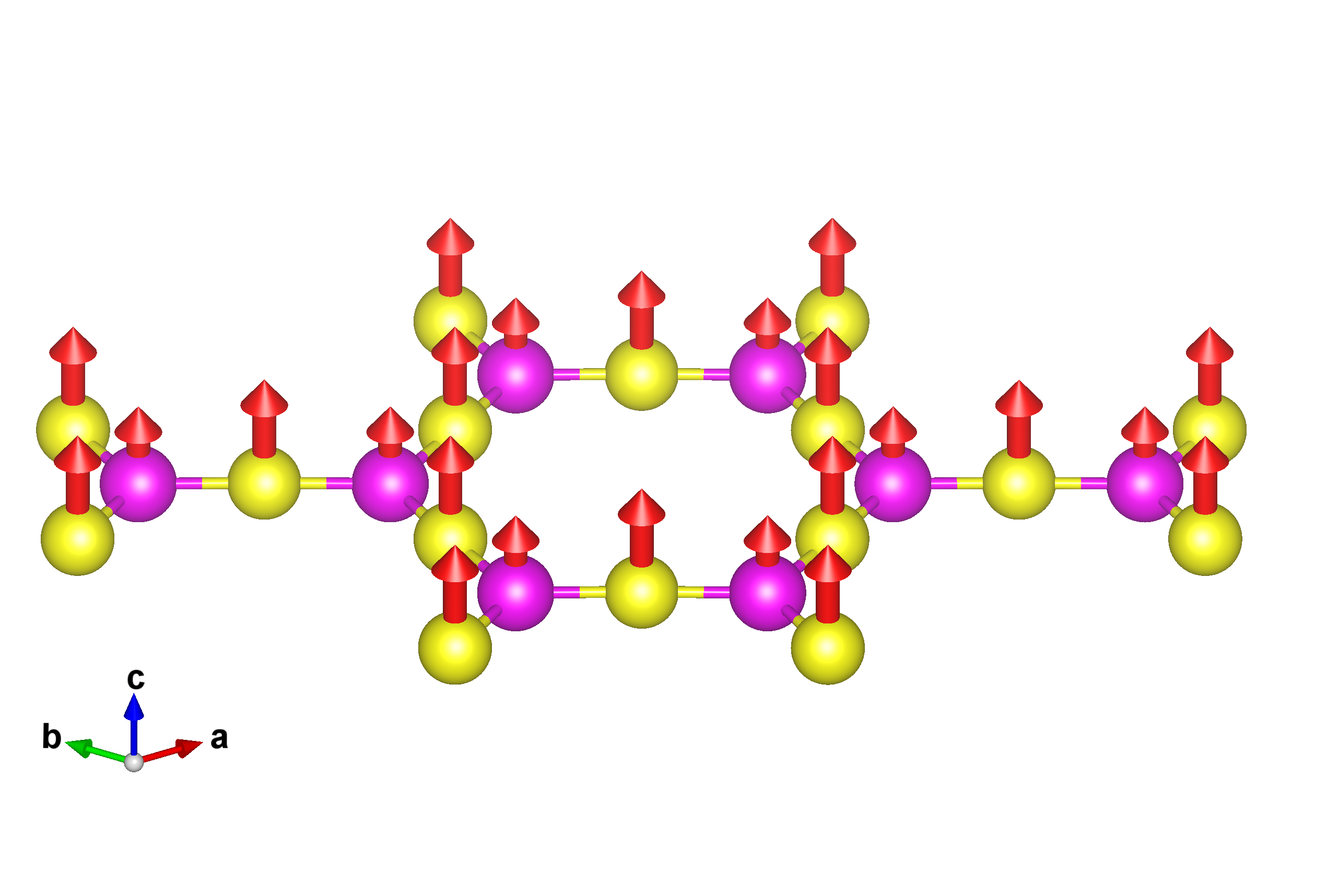} \\ 
  \scriptsize (a) $\Gamma_{\mathrm{ph}}: A_{2u}$ (Aco) \\ \tiny $\Gamma_{\mathrm{mag}}: A_{2g}$; $D_{6h}(C_{6h})$
\end{minipage}
\begin{minipage}{0.48\columnwidth}
  \centering \includegraphics[width=\textwidth]{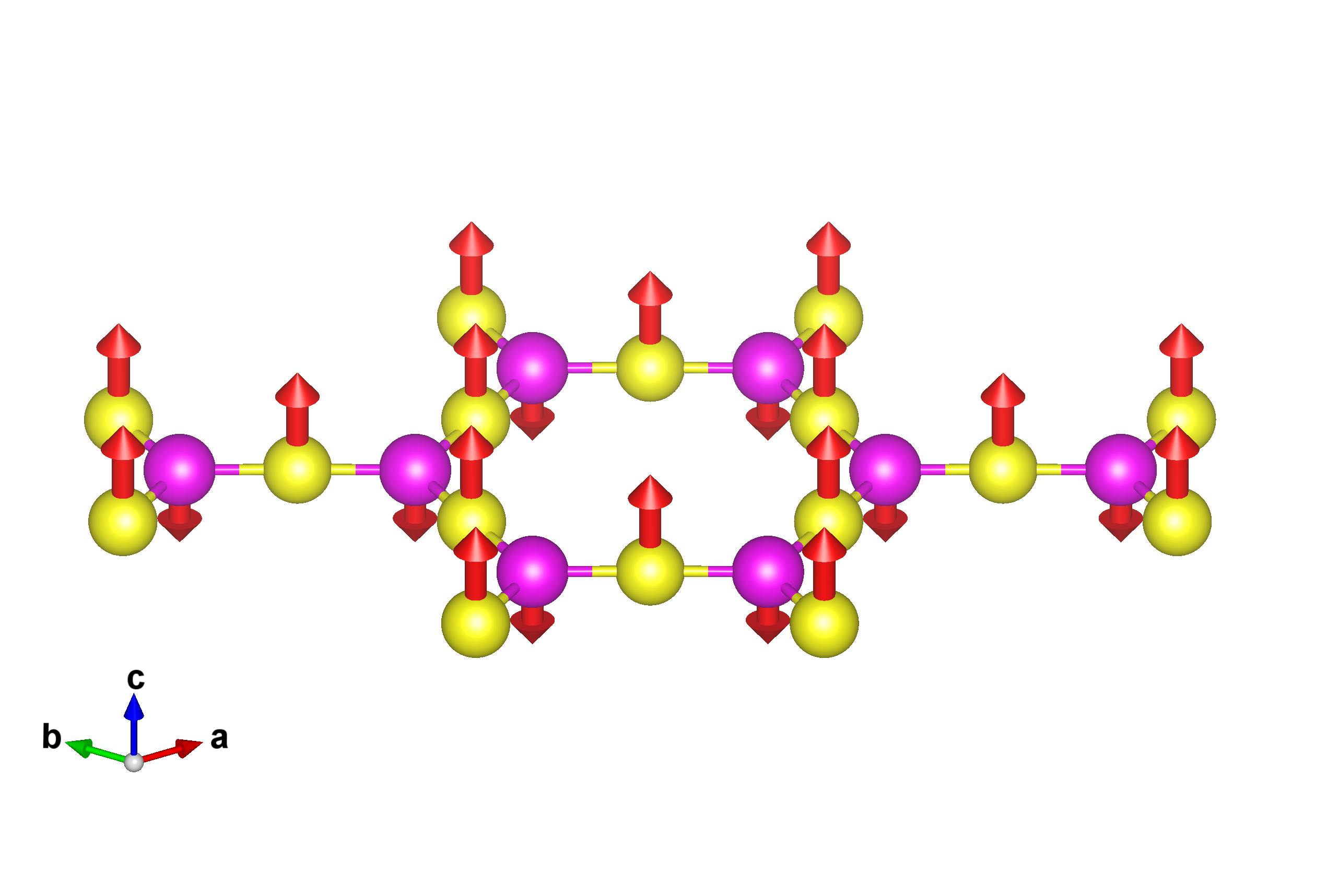} \\ 
  \scriptsize (b) $\Gamma_{\mathrm{ph}}: A_{2u}$ (Opt) \\ \tiny $\Gamma_{\mathrm{mag}}: A_{2g}$; $D_{6h}(C_{6h})$
\end{minipage} \\[0.3cm]
\begin{minipage}{0.48\columnwidth}
  \centering \includegraphics[width=\textwidth]{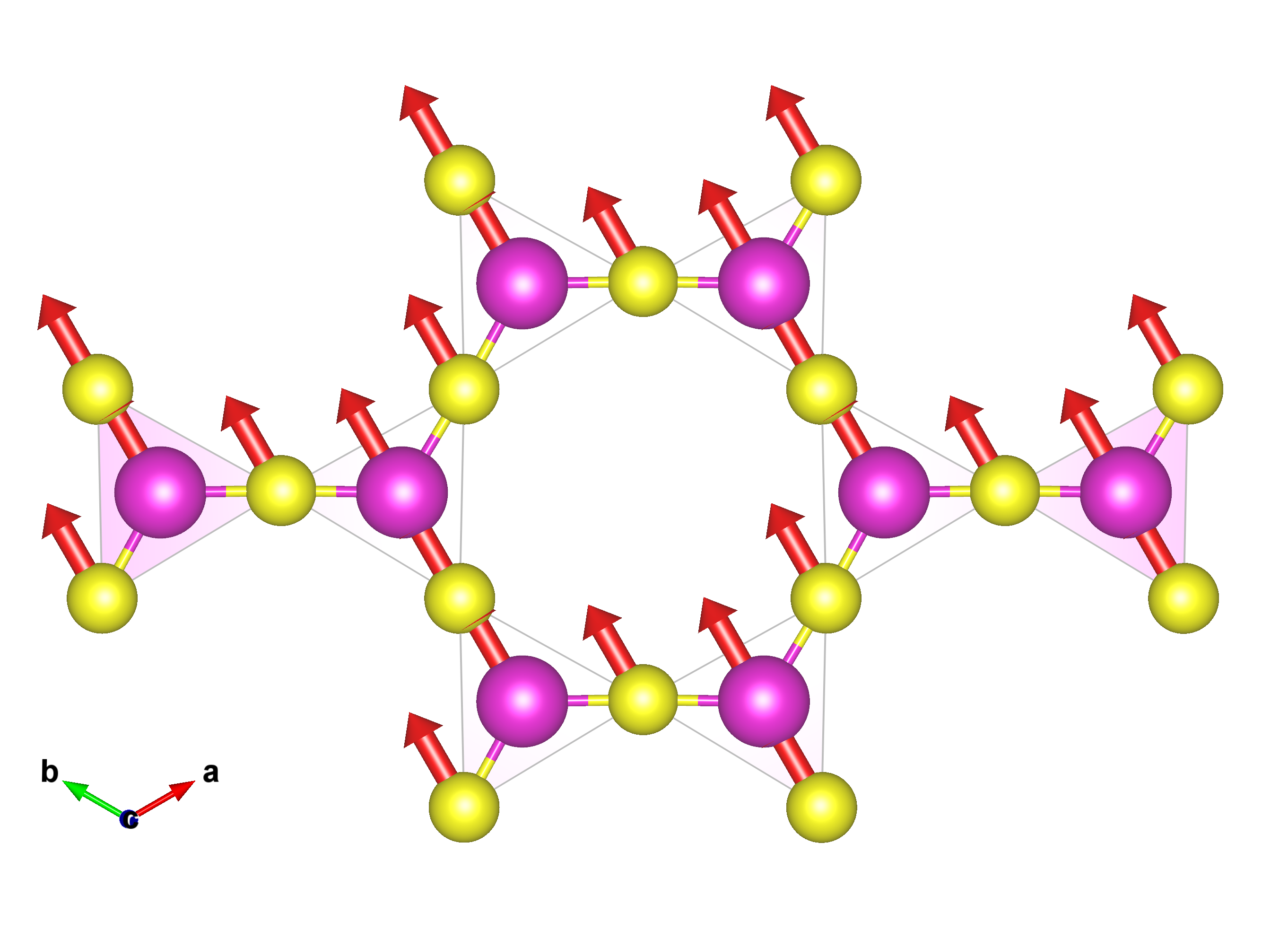} \\ 
  \scriptsize (c) $\Gamma_{\mathrm{ph}}: E_{1u}$ (Aco) B1 \\ \tiny $\Gamma_{\mathrm{mag}}: E_{1g}$; $D_{2h}(C_{2h})$
\end{minipage}
\begin{minipage}{0.48\columnwidth}
  \centering \includegraphics[width=\textwidth]{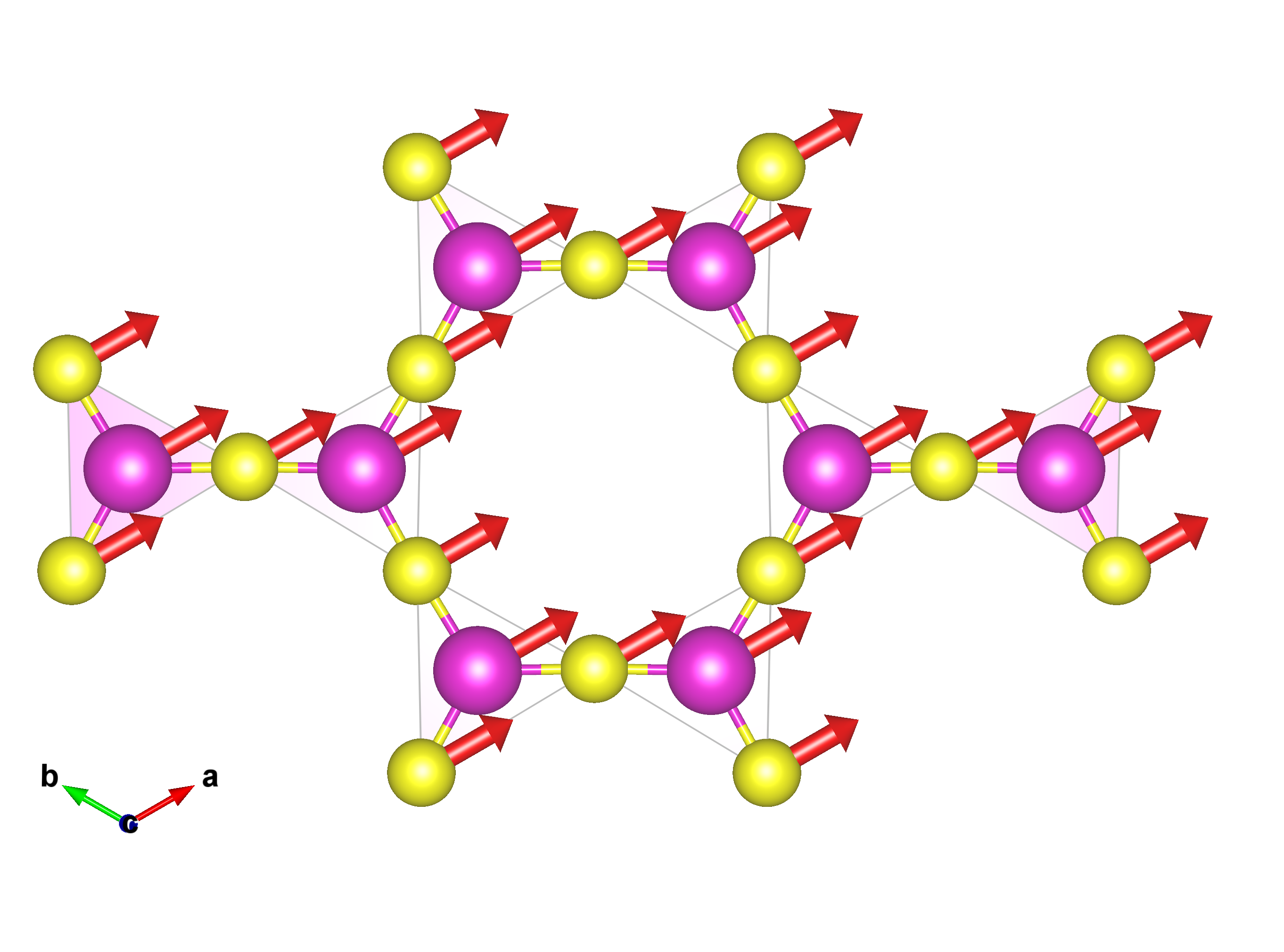} \\ 
  \scriptsize (d) $\Gamma_{\mathrm{ph}}: E_{1u}$ (Aco) B2 \\ \tiny $\Gamma_{\mathrm{mag}}: E_{1g}$; $D_{2h}(C_{2h})$
\end{minipage} \\[0.3cm]
\begin{minipage}{0.48\columnwidth}
  \centering \includegraphics[width=\textwidth]{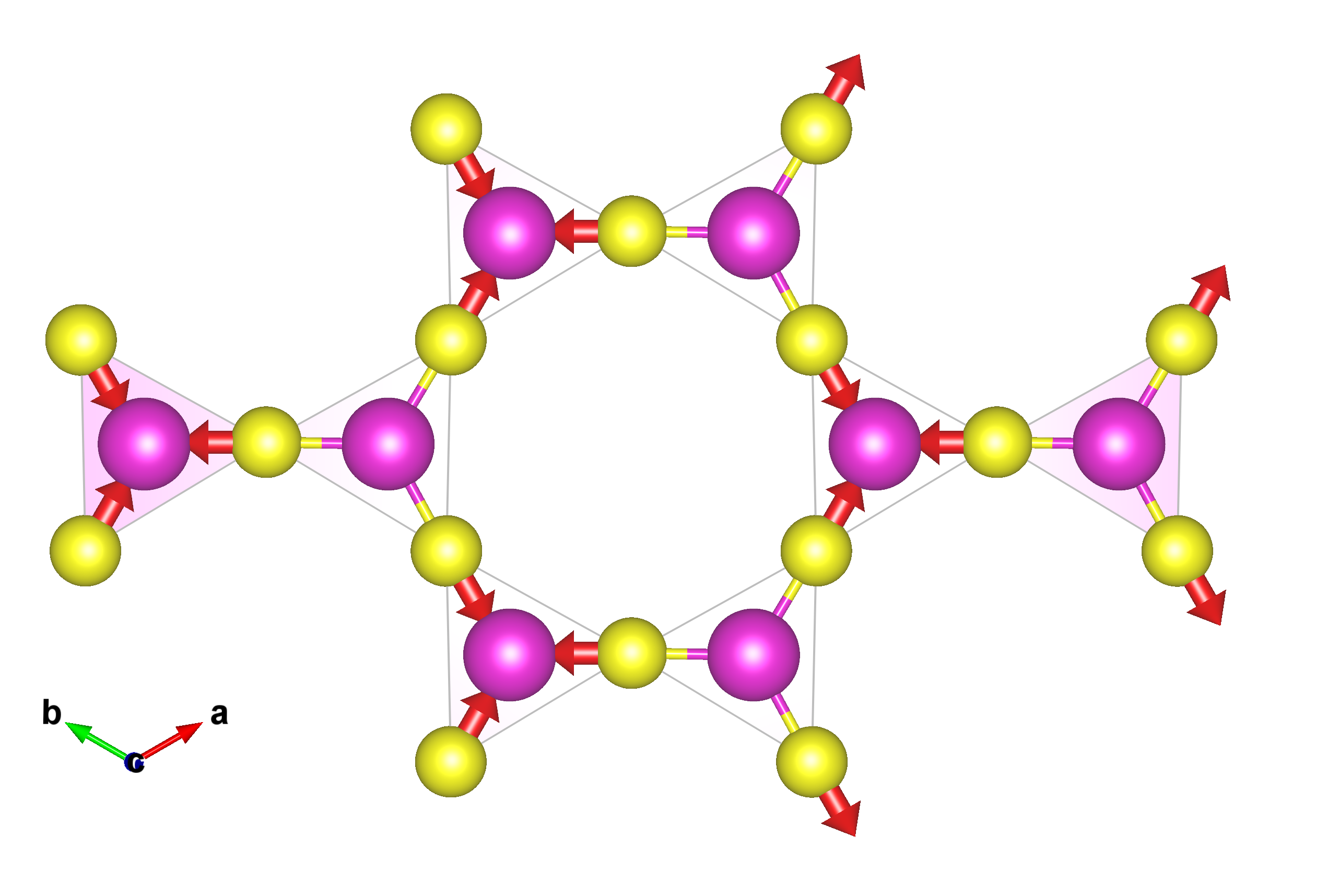} \\ 
  \scriptsize (e) $\Gamma_{\mathrm{ph}}: B_{1u}$ (Opt) \\ \tiny $\Gamma_{\mathrm{mag}}: B_{1g}$; $D_{6h}(D_{3d})$
\end{minipage}
\begin{minipage}{0.48\columnwidth}
  \centering \includegraphics[width=\textwidth]{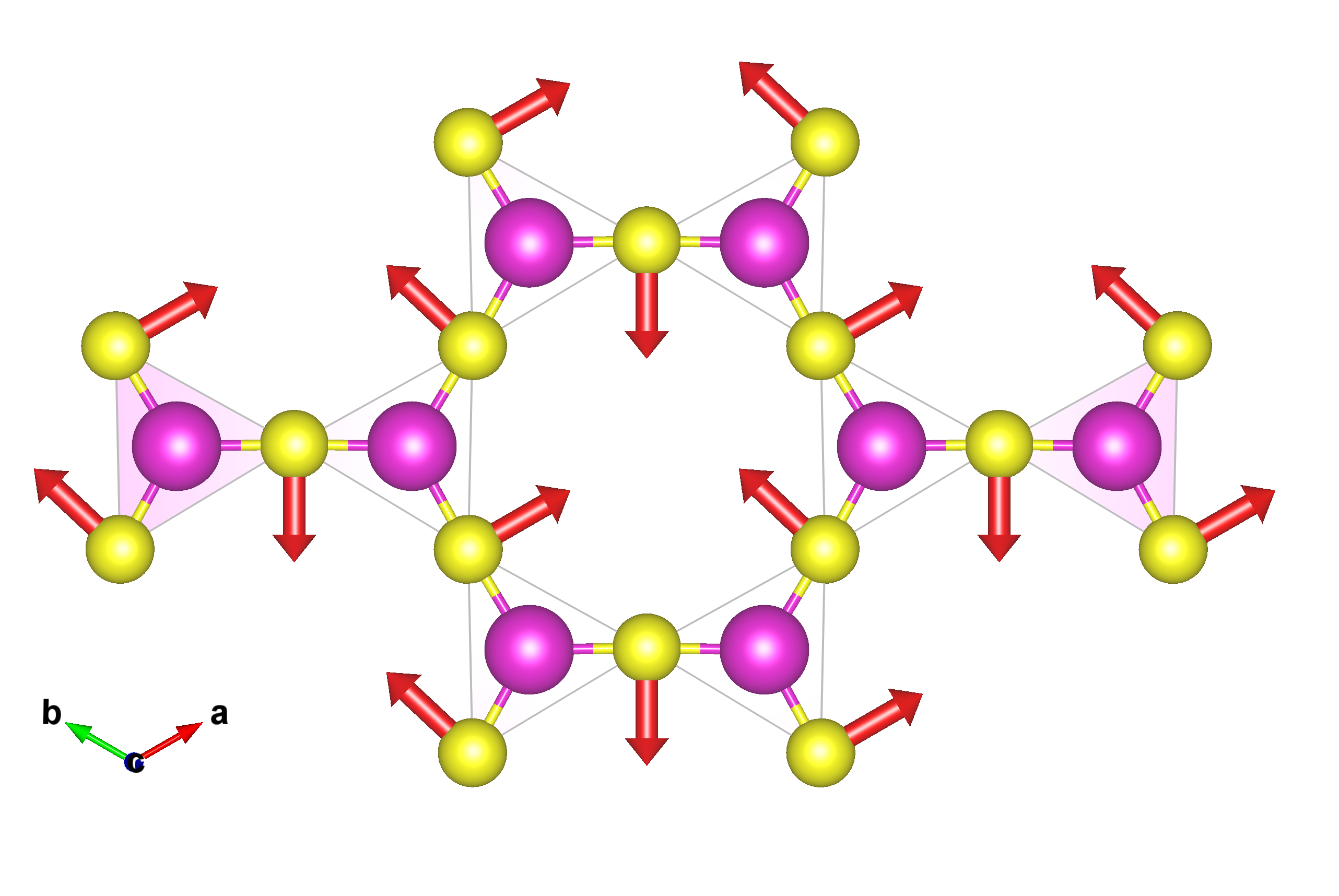} \\ 
  \scriptsize (f) $\Gamma_{\mathrm{ph}}: B_{2u}$ (Opt) \\ \tiny $\Gamma_{\mathrm{mag}}: B_{2g}$; $D_{6h}(D_{3d})$
\end{minipage}
\caption{Real-space symmetry-adapted magnetic templates generated from the three acoustic modes and three one-dimensional optical modes of monolayer $\mathrm{Cd}_2\mathrm{N}_3$ via the pseudoscalar mapping $\Gamma_{\mathrm{mag}} = \Gamma_{\mathrm{ph}} \otimes A_{1u}$. (a, c, d) The three acoustic vibrational degrees of freedom, comprising the out-of-plane acoustic mode $A_{2u}$ (a) and the two in-plane degenerate acoustic modes $E_{1u}$ (c, d), map exclusively to uniform out-of-plane FM-$z$ [$D_{6h}(C_{6h})$] and in-plane FM-$xy$ [$D_{2h}(C_{2h})$] ferromagnetic templates. (b, e, f) The three one-dimensional optical modes: the out-of-plane optical mode $A_{2u}$ (b) maps uniquely to the symmetry-compatible uncompensated FiM-$z$ template, which is subsequently confirmed to be the ground state by first-principles calculations, while the two in-plane non-degenerate optical modes $B_{1u}$ (e) and $B_{2u}$ (f) map to rank-3 even-parity cluster magnetic octupoles with radial all-in/all-out ($B_{1g}$) and toroidal vortex-like ($B_{2g}$) textures with residual $D_{6h}(D_{3d})$ symmetry.}
\label{fig:stability_multipoles}
\end{figure}

\begin{figure}[!t]
\centering
\begin{minipage}{0.48\columnwidth}
  \centering \includegraphics[width=\textwidth]{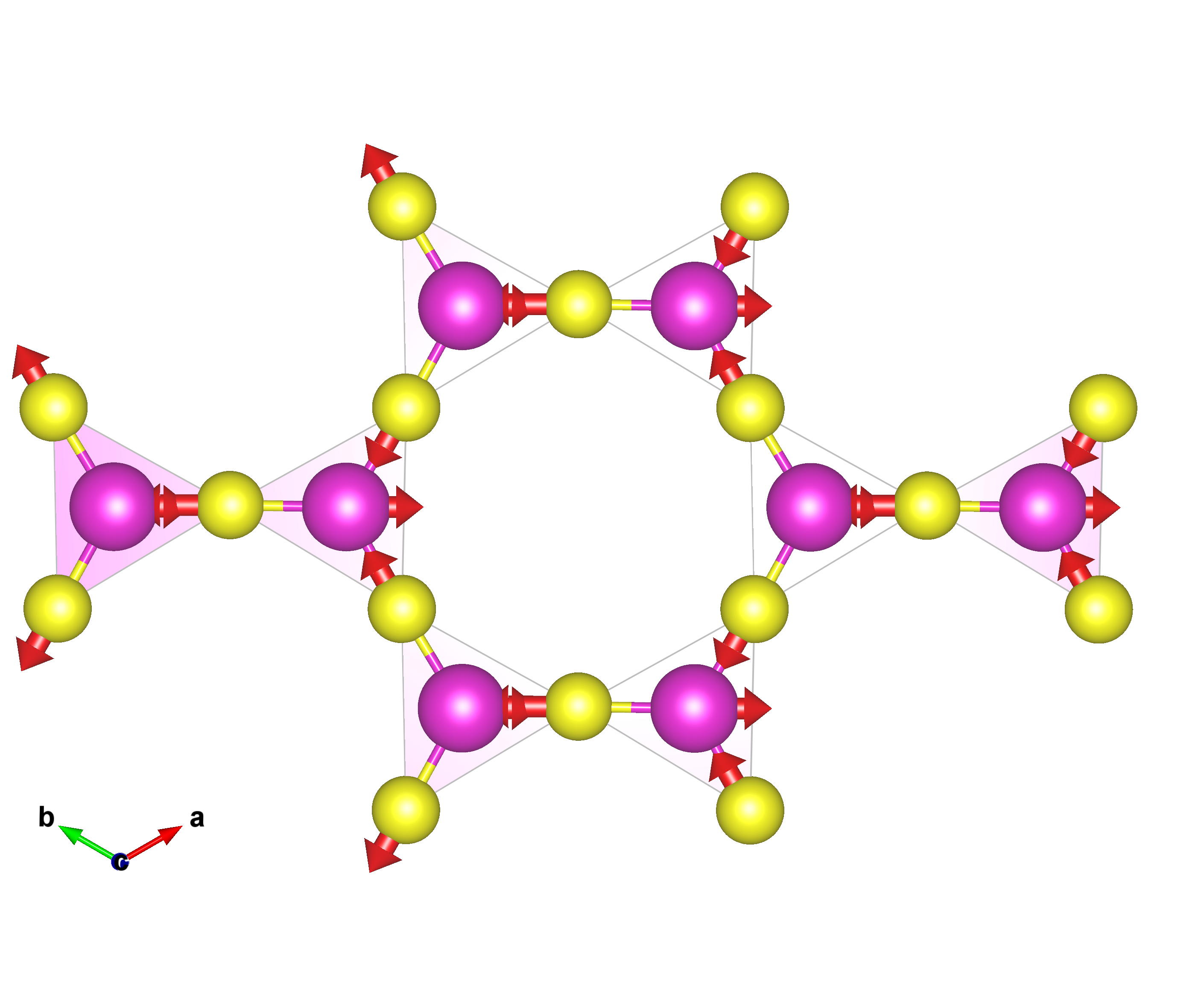} \\ 
  \scriptsize (a) $\Gamma_{\mathrm{ph}}: E_{1u}$ (Opt-1) B1 \\ \tiny $\Gamma_{\mathrm{mag}}: E_{1g}$; $D_{2h}(C_{2h})$
\end{minipage}
\begin{minipage}{0.48\columnwidth}
  \centering \includegraphics[width=\textwidth]{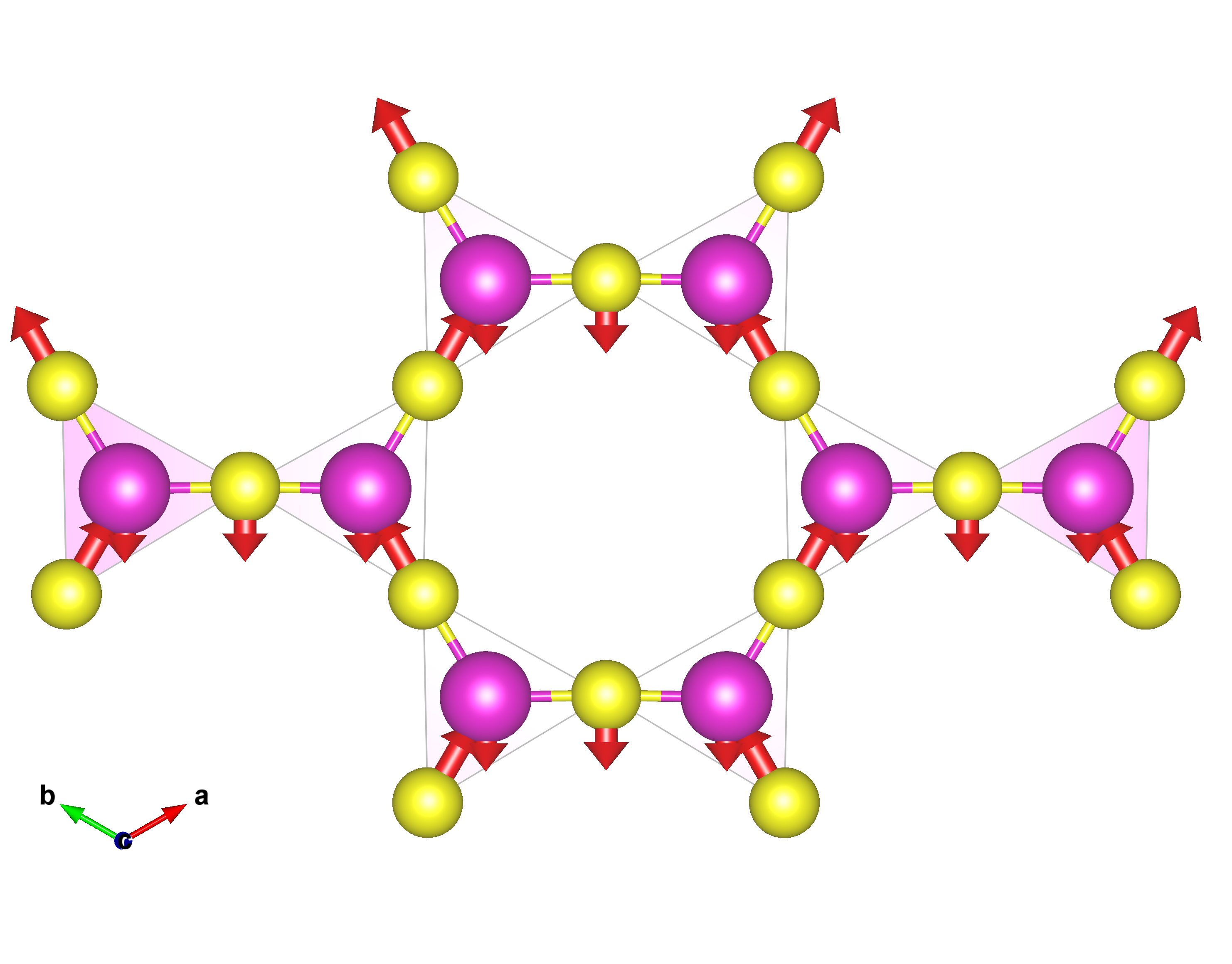} \\ 
  \scriptsize (b) $\Gamma_{\mathrm{ph}}: E_{1u}$ (Opt-1) B2 \\ \tiny $\Gamma_{\mathrm{mag}}: E_{1g}$; $D_{2h}(C_{2h})$
\end{minipage} \\[0.3cm]
\begin{minipage}{0.48\columnwidth}
  \centering \includegraphics[width=\textwidth]{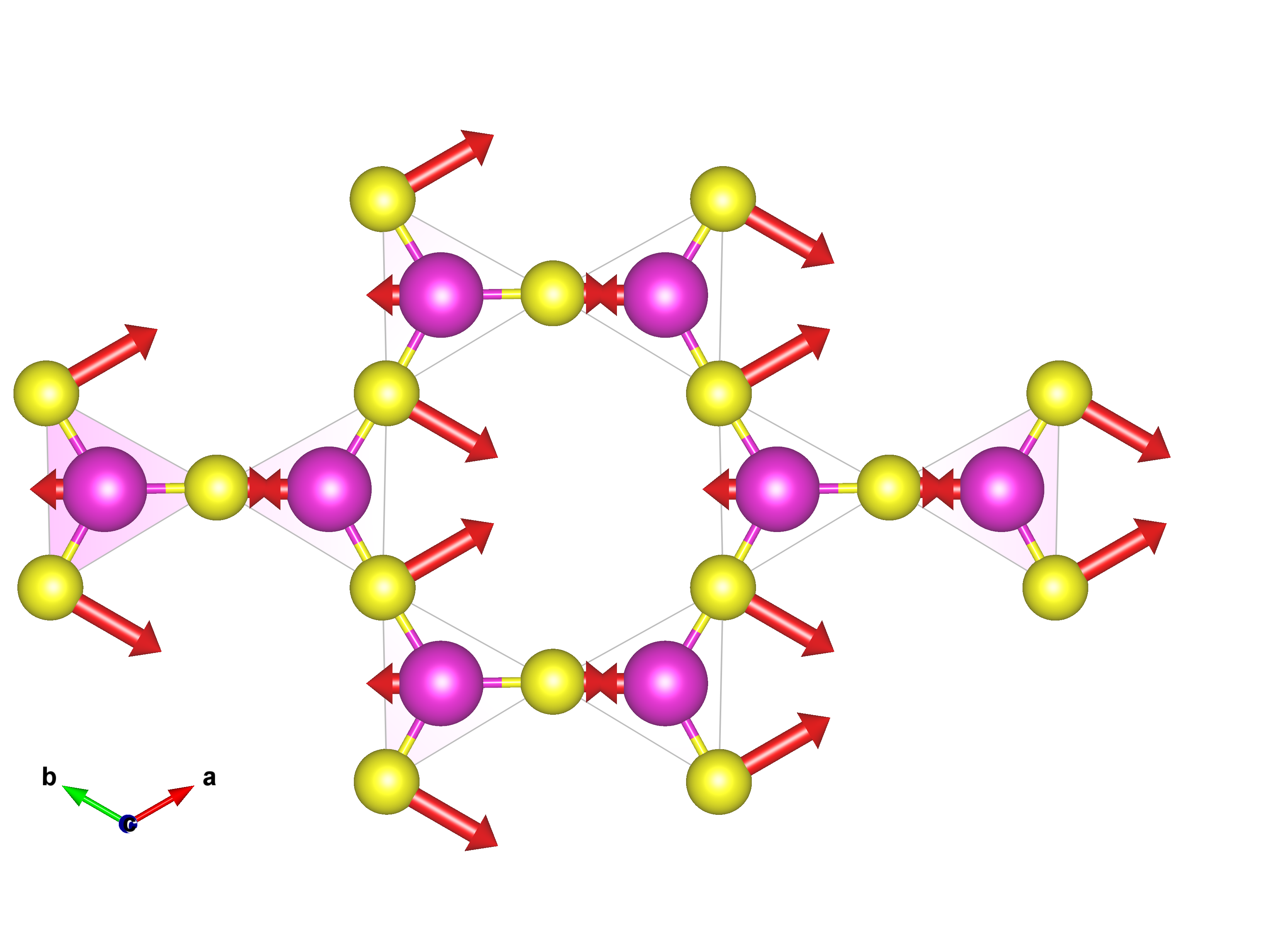} \\ 
  \scriptsize (c) $\Gamma_{\mathrm{ph}}: E_{1u}$ (Opt-2) B1 \\ \tiny $\Gamma_{\mathrm{mag}}: E_{1g}$; $D_{2h}(C_{2h})$
\end{minipage}
\begin{minipage}{0.48\columnwidth}
  \centering \includegraphics[width=\textwidth]{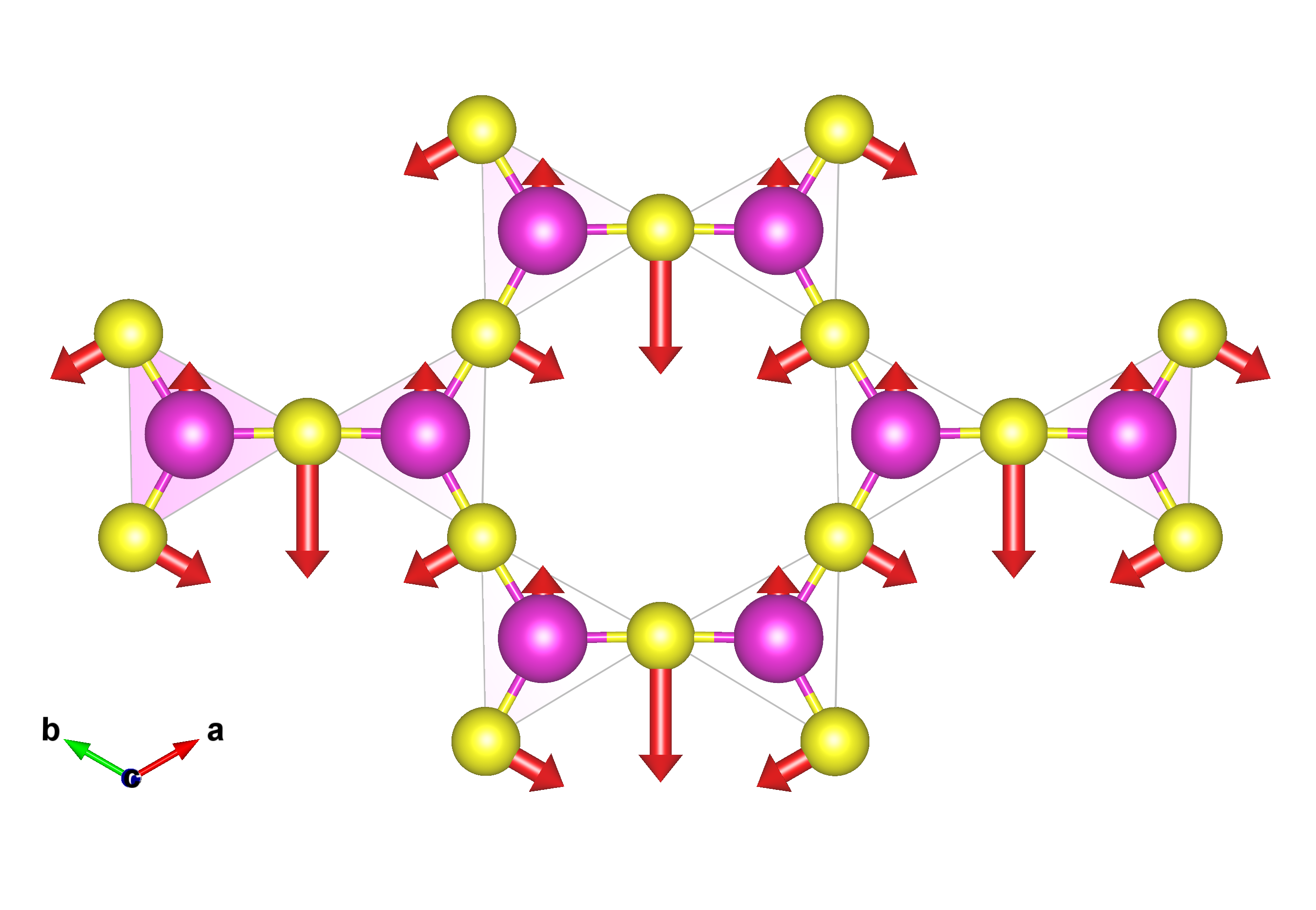} \\ 
  \scriptsize (d) $\Gamma_{\mathrm{ph}}: E_{1u}$ (Opt-2) B2 \\ \tiny $\Gamma_{\mathrm{mag}}: E_{1g}$; $D_{2h}(C_{2h})$
\end{minipage} \\[0.3cm]
\begin{minipage}{0.48\columnwidth}
  \centering \includegraphics[width=\textwidth]{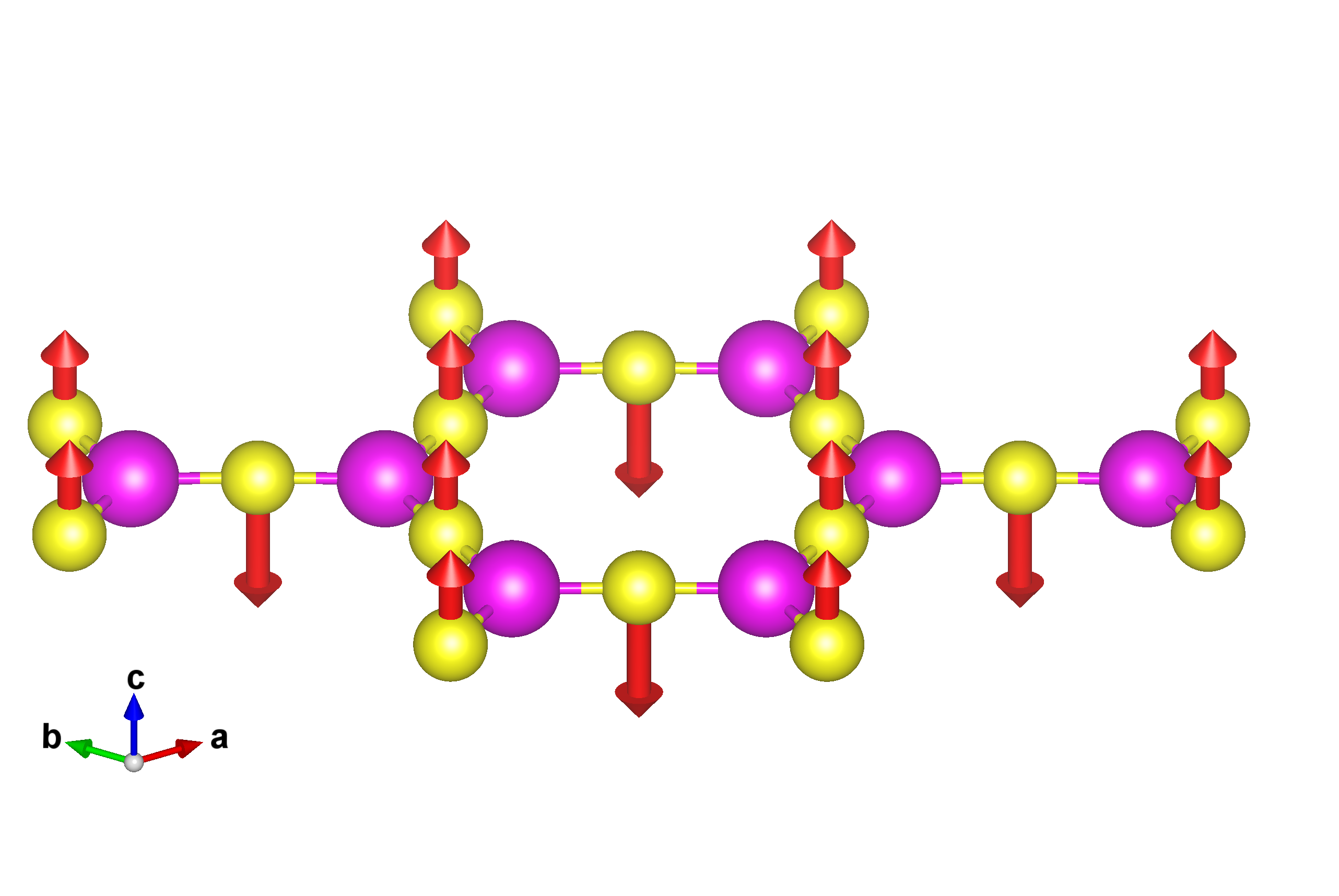} \\ 
  \scriptsize (e) $\Gamma_{\mathrm{ph}}: E_{2u}$ (Opt) B1 \\ \tiny $\Gamma_{\mathrm{mag}}: E_{2g}$; $D_{2h}(C_{2h})$
\end{minipage}
\begin{minipage}{0.48\columnwidth}
  \centering \includegraphics[width=\textwidth]{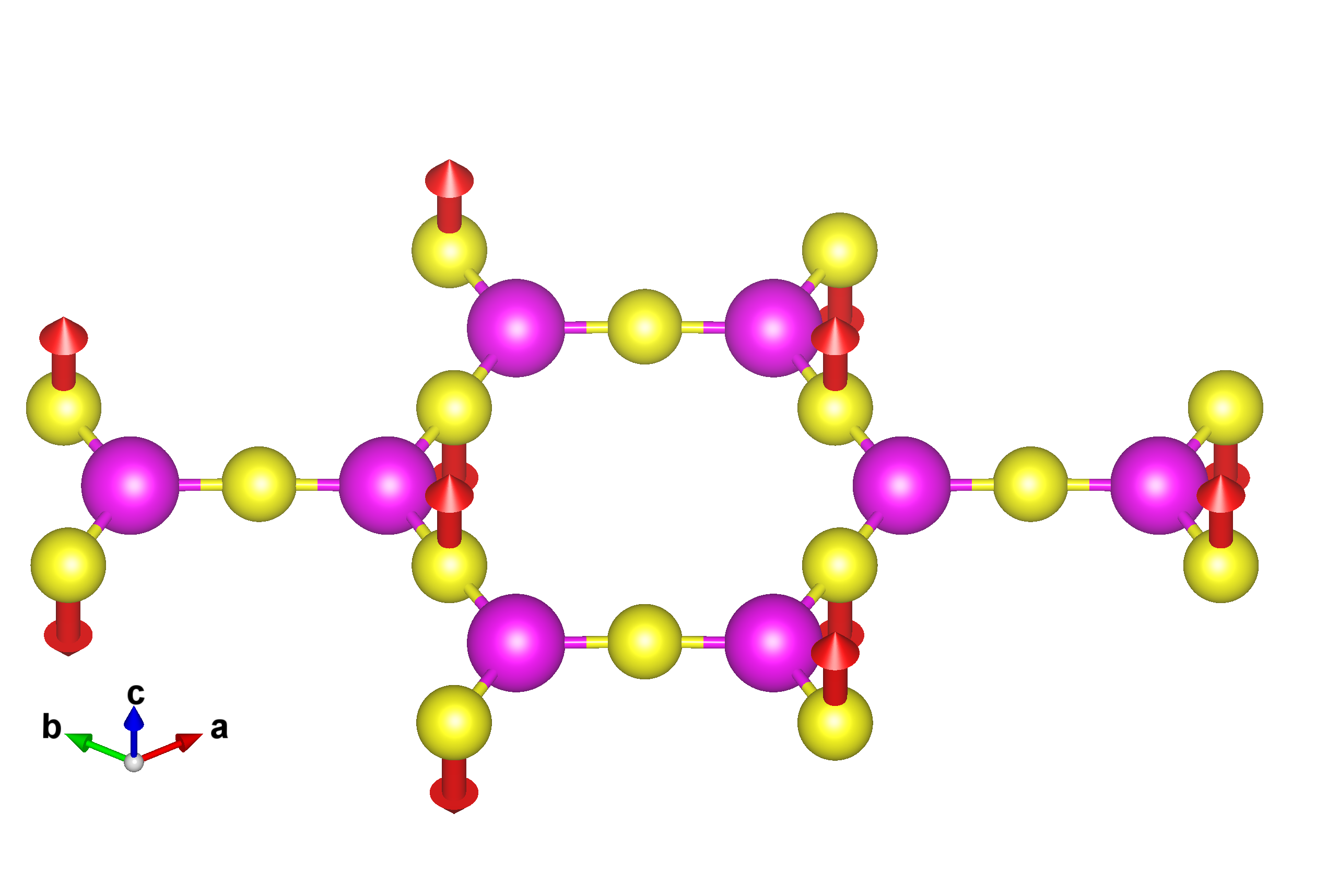} \\ 
  \scriptsize (f) $\Gamma_{\mathrm{ph}}: E_{2u}$ (Opt) B2\\ \tiny $\Gamma_{\mathrm{mag}}: E_{2g}$; $D_{2h}$
\end{minipage} \\[0.3cm]
\begin{minipage}{0.48\columnwidth}
  \centering \includegraphics[width=\textwidth]{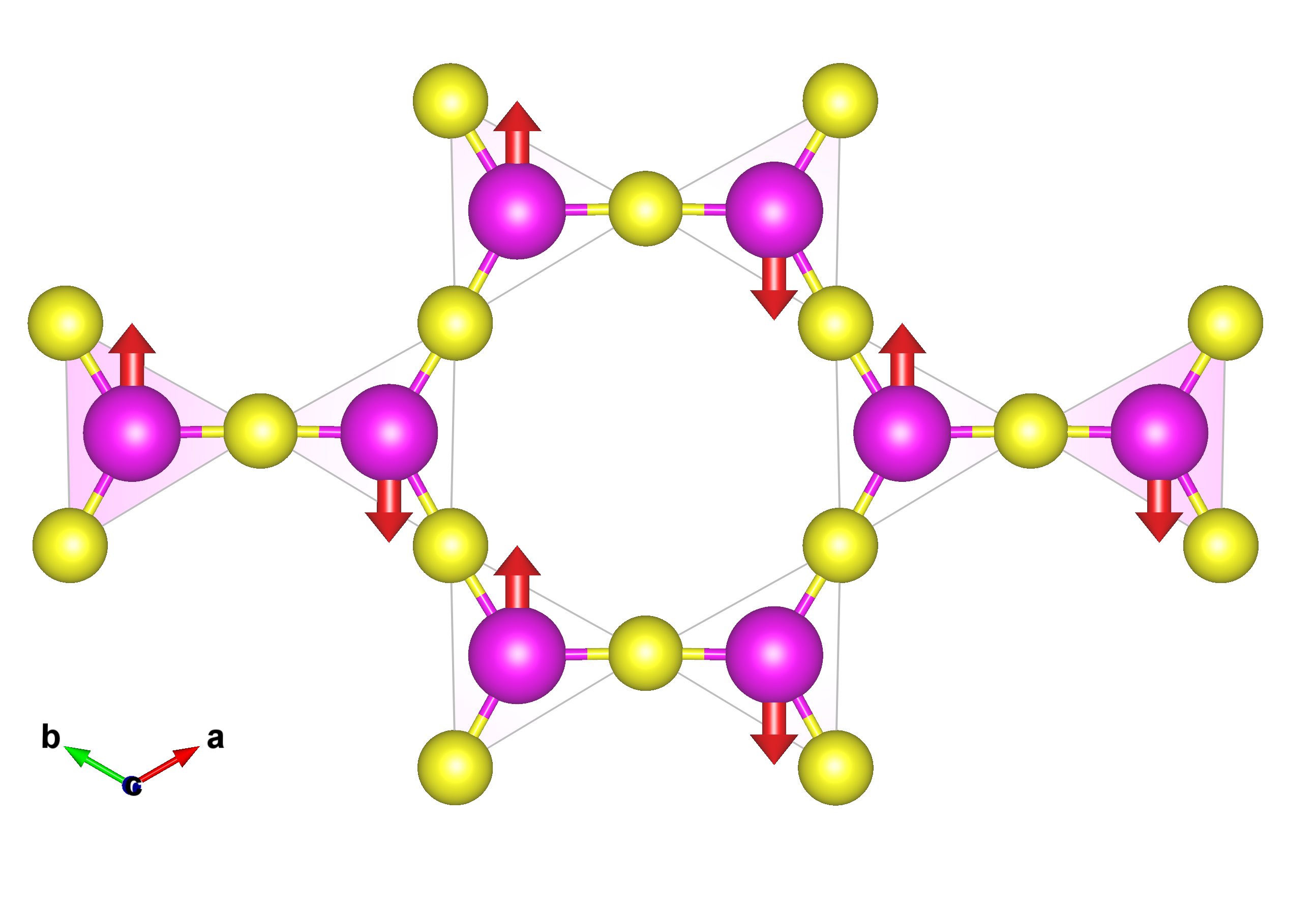} \\ 
  \scriptsize (g) $\Gamma_{\mathrm{ph}}: E_{2g}$  B1 \\ \tiny $\Gamma_{\mathrm{mag}}: E_{2u}$; $D_{2h}(C_{2v})$ (AFM-$y$)
\end{minipage}
\begin{minipage}{0.48\columnwidth}
  \centering \includegraphics[width=\textwidth]{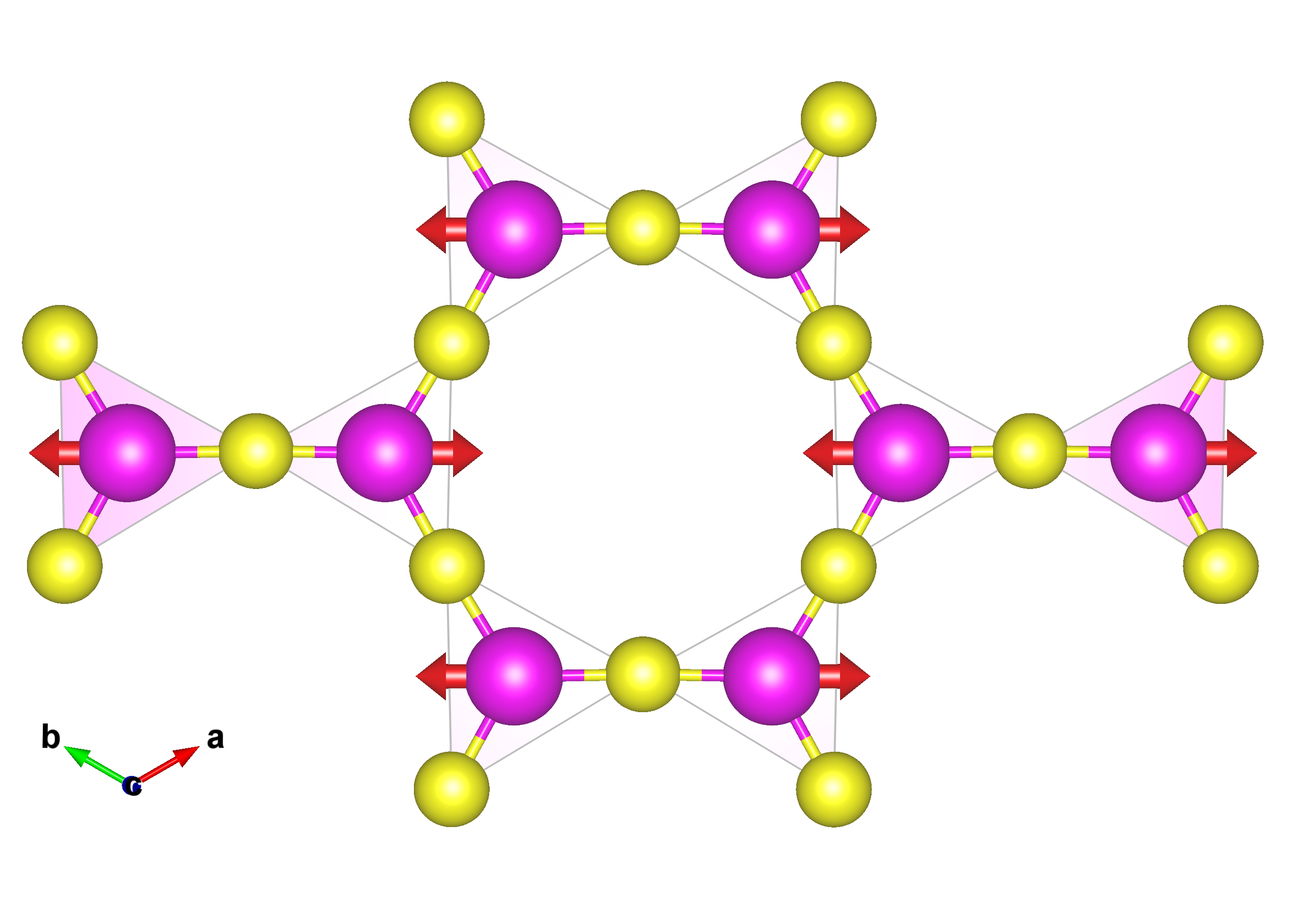} \\ 
  \scriptsize (h) $\Gamma_{\mathrm{ph}}: E_{2g}$ B2 \\ \tiny $\Gamma_{\mathrm{mag}}: E_{2u}$; $D_{2h}(D_{2})$ (AFM-$x$) 
\end{minipage}
\caption{Atlas of mode-resolved symmetry-adapted magnetic templates generated from the eight symmetry-adapted basis realizations associated with the three optical representation sectors ($2E_{1u}\oplus E_{2u}\oplus E_{2g}$) in monolayer $\mathrm{Cd}_2\mathrm{N}_3$. (a--d) The two optical $E_{1u}$ copies ($2E_{1u}$) yield four linearly independent magnetic templates corresponding to the high-symmetry order-parameter directions (Paths B1 and B2) of Opt-1 (a, b) and Opt-2 (c, d), which together span the four-dimensional $2E_{1g}$ in-plane magnetic configuration space with residual $D_{2h}(C_{2h})$ symmetry. (e, f) The two components (Paths B1 and B2) of the optical $E_{2u}$ mode map to staggered out-of-plane AFM-$z$ templates with $D_{2h}(C_{2h})$ and $D_{2h}$ symmetries. (g, h) The two components (Paths B1 and B2) of the optical $E_{2g}$ mode map to collinear in-plane AFM-$y$ and AFM-$x$ templates on Cd sites with $D_{2h}(C_{2v})$ and $D_{2h}(D_{2})$ symmetries.}
\label{fig:2D_manifolds_vertical}
\end{figure}

\subsection{Two-Dimensional Modes: Single and Multi-Copy Representation Spaces}
\label{subsec:2d_modes}

For multi-dimensional irreps ($d_\Gamma > 1$) and/or multiple occurrences of the same irrep ($n_\Gamma > 1$), representation theory defines invariant subspaces within which order-parameter directions (OPDs) select specific broken-symmetry branches (epikernels) [Secs.~\ref{subsec:decoupling} and \ref{subsec:branch_selection}]:

\textit{1. Single Two-Dimensional Optical Modes ($E_{2g}$ and $E_{2u}$ with $d=2, n=1$):---} 
\begin{itemize}
    \item \textbf{$E_{2g}$ Driver $\to E_{2u}$ Magnetic Sector:} Maps directly to collinear in-plane AFM-$xy$ configurations localized exclusively on Cd sites [Fig.~\ref{fig:2D_manifolds_vertical}(g) for Path B1 ($m'mm$, AFM-$y$) and Fig.~\ref{fig:2D_manifolds_vertical}(h) for Path B2 ($m'm'm'$, AFM-$x$)].
    \item \textbf{$E_{2u}$ Driver $\to E_{2g}$ Magnetic Sector:} Maps to staggered out-of-plane AFM-$z$ order localized on N sites [Fig.~\ref{fig:2D_manifolds_vertical}(e) for Path B1 ($mm'm'$) and Fig.~\ref{fig:2D_manifolds_vertical}(f) for Path B2 ($mmm^*$)].
\end{itemize}

\textit{2. Multi-Copy Two-Dimensional Sectors ($3E_{1u} = E_{1u}^{\text{aco}} \oplus 2E_{1u}^{\text{opt}}$ with $d=2, n=3$):---}
\begin{itemize}
    \item \textbf{Acoustic $E_{1u} \to E_{1g}$ ($n=1$):} Provides the two-dimensional in-plane FM-$xy$ template space [Fig.~\ref{fig:stability_multipoles}(c, d)]. Here, relative spin directions are locked, while the overall magnetization direction undergoes spontaneous symmetry breaking along high-symmetry order-parameter directions.
    \item \textbf{Two Optical $E_{1u}$ Copies ($2E_{1u}$) $\to 2E_{1g}$ ($d=2, n=2$):} Because both Cd and N atoms carry finite magnetic moments, the two optical $E_{1u}$ copies map onto the four-dimensional $2E_{1g}$ magnetic representation sector ($\dim(2E_{1g}) = 2 \times 2 = 4$). In accordance with the projection equivalence [Eq.~\eqref{eq:isotypic_iso}], the four configurations shown in Fig.~\ref{fig:2D_manifolds_vertical}(a--d), corresponding to the two symmetry directions of each optical copy, provide four linearly independent, mode-resolved physical realizations that span this four-dimensional projected subspace ($\mathrm{Im}(P_{\mathrm{mag}}^{(2E_{1g})}) = \mathrm{Im}(P_{\mathrm{ph}}^{(2E_{1u})})$).
    
    In general, an arbitrary magnetic order parameter within this representation sector is expressed as a linear superposition:
    \begin{equation}
        \mathbf{M} = \sum_{a=1}^{2} \sum_{\mu=1}^{2} \eta_{a\mu} \mathbf{M}_{a\mu},
    \end{equation}
    where $a=1,2$ indexes the two optical $E_{1u}$ copies and $\mu=1,2$ denotes the two-dimensional irrep directions (Path B1 and Path B2). The four templates in Fig.~\ref{fig:2D_manifolds_vertical}(a--d) correspond to the canonical coordinate directions $(1,0,0,0)$, $(0,1,0,0)$, $(0,0,1,0)$, and $(0,0,0,1)$, establishing the complete kinematic basis within which magnetic energetics perform thermodynamic free-energy minimization and branch selection (\textit{Selection}).
\end{itemize}

% =================================================================
% --- Section VI: Discussion ---
% =================================================================
\section{Discussion}
\label{sec:discussion}

\subsection{Spatial Mapping and Parent-Group Framework}
\label{subsec:spatial_mapping_disc}

The principal theoretical advance of this work is the formulation of a universal determinant-induced pseudoscalar twist acting involutively on the irreducible-representation landscape of crystallographic point groups, and its elevation to an exact representation-to-geometry correspondence. The physical significance of this construction arises because the twist is not merely an abstract relabeling of irreducible representations: in the common site-resolved Cartesian realization, it yields component-by-component identical matrix-element projection operators ($P_{\mathrm{mag}, mn}^{(\Gamma \otimes \Gamma_{\mathrm{ps}})} = P_{\mathrm{ph}, mn}^{(\Gamma)}$) and identical projected configuration subspaces ($\mathrm{Im}(P_{\mathrm{mag}}^{(\Gamma \otimes \Gamma_{\mathrm{ps}})}) = \mathrm{Im}(P_{\mathrm{ph}}^{(\Gamma)})$). This establishes the \textit{Template Principle}, proving that polar phonon modes and axial magnetic configurations are two physical realizations of the same parent-group representation geometry under the pseudoscalar twist, with all symmetry-enforced nodal manifolds strictly preserved.

This geometric equivalence clarifies the relationship between the present framework and conventional magnetic symmetry methods. The symmetry-allowed magnetic configuration space $V_{\mathrm{mag}} = \bigoplus_\Gamma V_\Gamma$ is already rigorously classified by Bertaut's magnetic representation analysis \cite{Bertaut1968}. Our central contribution is not a re-derivation of this known possibility space, but its concrete \textit{parent-lattice physical realization}: while Bertaut's approach establishes which magnetic configurations are symmetry-allowed by constructing abstract basis functions, our framework identifies where their real-space spatial templates already reside within the pre-existing parent crystal prior to energetic selection. In short, representation theory defines the magnetic possibility space, while our pseudoscalar mapping provides its mode-resolved parent-lattice realization within the corresponding magnetic configuration space, with the magnetic-sublattice restriction treated separately through $Q_{\mathrm{mag}}$. The framework is therefore fundamentally complementary to conventional representation and isotropy analyses \cite{Bertaut1968, Stokes1988_Isotropy}: parent-lattice kinematics organizes candidate templates (\textit{Possibility}), thermodynamic free-energy minimization selects the realized order-parameter direction (\textit{Selection}), and emergent magnetic groups capture the residual symmetry of that selected state.

\subsection{General Parent-Group Criterion for Linear Static Magnetic Responses}
\label{subsec:response_framework}

Beyond magnetic structure generation, a primary question is which macroscopic functional responses can couple to a given symmetry-adapted magnetic order parameter. In the conventional magnetic-point-group formulation, response tensors are typically analyzed from the symmetry of an already ordered phase \cite{Landau_StatPhys1}. In our formulation, the spatial-temporal separation principle within the paramagnetic gray parent group $\mathcal{G}_{\mathrm{P}} = G \times \Theta_{\mathcal{T}}$ provides an \emph{a priori} parent-group criterion for identifying symmetry-allowed response channels directly from mode-resolved representations, evaluated before the thermodynamic order-parameter direction is selected.

Within the paramagnetic parent group $\mathcal{G}_{\mathrm{P}} = G \times \Theta_{\mathcal{T}}$, the thermodynamic potential 
\begin{equation}
\varPhi_{\mathrm{resp}}^{(L)} = -\mathcal{R}_{i_1 \dots i_n}(L) \prod_a X_{a, i_a}
\end{equation}
describing the linear coupling between a condensed magnetic order parameter $L \sim (\Gamma_L, \tau_{\mathcal{T}}(L)=-1)$ and external static fields $X_a \sim (\Gamma_{X_a}, \tau_{\mathcal{T}}(X_a))$ must be invariant under the full parent group $\mathcal{G}_{\mathrm{P}}$, which, under the present spatial-temporal separation principle, requires simultaneous invariance under the spatial point group $G$ and the time-reversal group $\Theta_{\mathcal{T}}$. Here, ``linear response'' refers to the response being linear in the external driving fields, while the mode-resolved tensor $\mathcal{R}(L)$ is evaluated to leading order in the condensed magnetic order parameter. A nonzero coupling coefficient is symmetry-allowed if and only if both the spatial point-group invariance and the time-reversal invariance conditions are simultaneously satisfied:
\begin{equation}
\boxed{
\begin{cases}
\Gamma_L \otimes \displaystyle\bigotimes_a \Gamma_{X_a} \supset \Gamma_1, & (G\text{-invariance}), \\[10pt]
\tau_{\mathcal{T}}(L) \displaystyle\prod_a \tau_{\mathcal{T}}(X_a) = +1, & (\Theta_{\mathcal{T}}\text{-invariance}),
\end{cases}
}
\label{eq:dual_screening}
\end{equation}
where $\tau_{\mathcal{T}}(L) = -1$ reflects the time-odd character of the magnetic order parameter. Equation~\eqref{eq:dual_screening} operates directly within the parent group without requiring post-facto tensor expansions under lower-symmetry magnetic subgroups.

Two flagship linear static response channels illustrate this parent-group screening protocol:
\begin{enumerate}[(i)]
    \item \textbf{Linear Magnetoelectric Coupling (2nd-rank):} Described by the potential $\varPhi_{\mathrm{me}}^{(L)} = -\alpha_{jk}(L) E_j H_k$, where $E_j \sim \Gamma_{\mathrm{pol}}$ ($\tau_{\mathcal{T}}(E)=+1$) is the polar electric field and $H_k \sim \Gamma_{\mathrm{ps}} \otimes \Gamma_{\mathrm{pol}}$ ($\tau_{\mathcal{T}}(H)=-1$) is the axial magnetic field. For an order parameter originating from a phonon sector $\Gamma$ ($\Gamma_L = \Gamma \otimes \Gamma_{\mathrm{ps}}$), the system admits a symmetry-allowed magnetoelectric coupling if and only if
    \begin{equation}
    \begin{cases}
    (\Gamma \otimes \Gamma_{\mathrm{ps}}) \otimes \Gamma_{\mathrm{pol}} \otimes \left( \Gamma_{\mathrm{ps}} \otimes \Gamma_{\mathrm{pol}} \right) \supset \Gamma_1, \\[6pt]
    \tau_{\mathcal{T}}(L)\cdot\tau_{\mathcal{T}}(E)\cdot\tau_{\mathcal{T}}(H) = (-1)\cdot(+1)\cdot(-1) = +1.
    \end{cases}
    \end{equation}
    Thus the complete $EHL$ coupling is a scalar under the time-reversal group $\Theta_{\mathcal{T}}$.
    \item \textbf{Linear Piezomagnetic Coupling (3rd-rank):} Described by the potential $\varPhi_{\mathrm{pm}}^{(L)} = -\lambda_{i,kl}(L) H_i \sigma_{kl}$, where $\sigma_{kl} \sim [\Gamma_{\mathrm{pol}}^{\otimes 2}]_{\mathrm{sym}}$ ($\tau_{\mathcal{T}}(\sigma)=+1$) is the symmetric mechanical stress tensor. The system admits a symmetry-allowed piezomagnetic coupling if and only if
    \begin{equation}
    \begin{cases}
    (\Gamma \otimes \Gamma_{\mathrm{ps}}) \otimes \left( \Gamma_{\mathrm{ps}} \otimes \Gamma_{\mathrm{pol}} \right) \otimes [\Gamma_{\mathrm{pol}}^{\otimes 2}]_{\mathrm{sym}} \supset \Gamma_1, \\[6pt]
    \tau_{\mathcal{T}}(L)\cdot\tau_{\mathcal{T}}(H)\cdot\tau_{\mathcal{T}}(\sigma) = (-1)\cdot(-1)\cdot(+1) = +1.
    \end{cases}
    \end{equation}
    The corresponding $LH\sigma$ coupling is likewise time-reversal even.
\end{enumerate}

Crucially, because both the magnetic order parameter ($\Gamma_L = \Gamma \otimes \Gamma_{\mathrm{ps}}$) and the magnetic-field factor ($\Gamma_H = \Gamma_{\mathrm{ps}} \otimes \Gamma_{\mathrm{pol}}$) carry the same pseudoscalar twist, the spatial axial character cancels pairwise at the level of the parent-group invariant ($\Gamma_{\mathrm{ps}}^{\otimes 2} = \Gamma_1$). The resulting spatial selection rules for both magnetoelectric and piezomagnetic couplings reduce directly to ordinary polar tensor products ($\Gamma \otimes \Gamma_{\mathrm{pol}}^{\otimes 2} \supset \Gamma_1$ and $\Gamma \otimes \Gamma_{\mathrm{pol}} \otimes [\Gamma_{\mathrm{pol}}^{\otimes 2}]_{\mathrm{sym}} \supset \Gamma_1$). The response criterion demonstrates that the same spatial-temporal organization extends naturally from magnetic configuration spaces to macroscopic tensorial couplings, providing the parent-group algebraic foundation for identifying functional response channels prior to order-parameter selection. The present work therefore establishes the parent-group kinematic and algebraic foundation ($\text{Possibility} \to \text{Selection} \to \text{Response}$), while companion studies address its constitutive realization in specific response channels.

\subsection{Scope and Limitations}
\label{subsec:scope_limitations}

The pseudoscalar spatial mapping itself is defined for all 32 crystallographic point groups, whereas the present framework is formulated for commensurate $\Gamma$-point ($\mathbf{k}=0$) magnetic orders preserving the parent crystallographic unit cell. It does not currently treat incommensurate structures or spin spirals with finite propagation vectors $\mathbf{q}$. Furthermore, the present formulation does not attempt to resolve the microscopic electronic degrees of freedom of itinerant magnetism; its magnetic configuration space is formulated in terms of classical axial-vector order parameters. The present framework is configurational and symmetry-based on the magnetic side rather than dynamical; it does not by itself determine microscopic exchange couplings, spin-wave dispersions, or quantum many-body states. As throughout the framework, symmetry determines the admissible configuration and coupling space, but does not determine energetic coefficients, transition temperatures, or the thermodynamically selected ground state. Looking forward, because the pseudoscalar twist acts purely on the spatial sector, extending the framework to finite-wavevector representation sectors $\mathbf{k}$ provides a natural direction for future extensions to finite-wavevector magnetic orders. Because the pseudoscalar twist is a structural property of the parent-group representation space itself, the same representation-level organization may provide a broader starting point for symmetry-sensitive structural and functional phenomena beyond the magnetic responses considered here.

% =================================================================
% --- Section VII: Conclusion ---
% =================================================================
\section{Conclusion}
\label{sec:conclusion}

In summary, this work formulates a universal determinant-induced pseudoscalar twist that acts involutively on the irreducible-representation landscape of crystallographic point groups, establishing an exact representation-to-geometry correspondence between parent-lattice vibrations and symmetry-adapted magnetic structures. Within the paramagnetic gray parent group $\mathcal{G}_{\mathrm{P}} = G \times \Theta_{\mathcal{T}}$, the framework implements a factorized spatial-temporal separation principle: the universal pseudoscalar twist governs the spatial polar-to-axial geometric conversion ($\Gamma_{\mathrm{mag}} = \Gamma \otimes \Gamma_{\mathrm{ps}}$), while time-reversal oddness ($\tau_{\mathcal{T}} = -1$) independently defines the magnetic character. Through the component-by-component identity of spatial projection operators ($P_{\mathrm{mag}, mn}^{(\Gamma \otimes \Gamma_{\mathrm{ps}})} = P_{\mathrm{ph}, mn}^{(\Gamma)}$) and identical projected configuration subspaces ($\mathrm{Im}(P_{\mathrm{mag}}^{(\Gamma\otimes\Gamma_{\mathrm{ps}})}) = \mathrm{Im}(P_{\mathrm{ph}}^{(\Gamma)})$) under the common Cartesian realization, we establish the \textit{Template Principle}: parent-phase vibrational normal modes furnish concrete, symmetry-defined real-space templates whose symmetry-enforced nodal manifolds are strictly inherited by magnetic order parameters. The theory does not redefine crystallographic irreps; rather, it exposes an intrinsic, universal involutive structure within their full landscape and provides an exact physical realization bridging abstract magnetic representation spaces to the pre-existing parent crystal.

Application to monolayer $\mathrm{Cd}_2\mathrm{N}_3$ demonstrates how the framework systematically identifies the symmetry channel and real-space configuration of the out-of-plane ferrimagnetic ground state originating from the $A_{2u}$ parent phonon sector (confirmed by first-principles calculations), alongside cluster magnetic octupoles ($B_{1g}, B_{2g}$) and multidimensional antiferromagnetic configuration sectors directly from parent kinematics. The framework thereby organizes the progression from symmetry-defined \textit{Possibility} to thermodynamic \textit{Selection} and, ultimately, to functional \textit{Response}. Because the determinant-induced pseudoscalar twist is an intrinsic structural property of the parent-group representation landscape rather than a material-specific construction, the resulting spatial-temporal architecture provides a general algebraic foundation for companion studies of linear magnetoelectricity and piezomagnetism, as well as broader symmetry-driven phenomena in crystalline materials.

% =================================================================
% --- Section VIII: Acknowledgments ---
% =================================================================

\begin{acknowledgments}
This work was supported by the Advanced Materials-National Science and Technology Major Project of China (Grant No. 2025ZD0620000) and the National Natural Science Foundation of China (Grant Nos. 62174136 and 62674192). H. B. Niu acknowledges financial support from the Natural Science Foundation of Shaanxi Province, China (Grant No. 2021JM-541) and the Youth Innovation Team of Shaanxi Universities.
\end{acknowledgments}

% =================================================================
% --- Section IX: Appendix ---
% =================================================================

\appendix

% 1. 重置并重定义公式编号 (A1, A2, ...)
\setcounter{equation}{0}
\renewcommand{\theequation}{A\arabic{equation}}

% 2. 重设并重定义表格编号 (Table A1, Table A2, ...)
\setcounter{table}{0}
\renewcommand{\thetable}{A\arabic{table}}

% 3. 重设并重定义图片编号 (Fig. A1, Fig. A2, ...) [可选]
\setcounter{figure}{0}
\renewcommand{\thefigure}{A\arabic{figure}}

\section{Representation-Theoretic and Categorical Proof of the Pseudoscalar Mapping}
\hypertarget{app:ctopr}{} 
\label{app:ctopr}

The categorical formulation below is not an additional physical assumption but provides the rigorous structural rationale for the representation-theoretic results established in the main text. To establish the universality and structure-preserving nature of the pseudoscalar mapping across all parent point groups, we provide a representation-theoretic and categorical formulation unifying representation multiplicities, projection operators, and real-space template equivalences.

\subsection{Representation Category and Pseudoscalar Autoequivalence}
\label{subsec:rep_category}

Let $G$ be a finite crystallographic point group, and let $\mathcal{C}=\mathrm{Rep}(G)$ denote the category of its finite-dimensional complex representations. For each $g\in G$, let $R_g\in O(3)$ denote the orthogonal matrix acting on real space. We define the \emph{pseudoscalar representation} $\Gamma_{\mathrm{ps}}$ by the one-dimensional character $\chi_{\mathrm{ps}}(g)=\det(R_g)\in\{+1,-1\}$.

Because $\Gamma_{\mathrm{ps}}\otimes\Gamma_{\mathrm{ps}} \cong \mathbf{1}$ (where $\mathbf{1} \equiv \Gamma_1$ denotes the trivial representation of $G$), $\Gamma_{\mathrm{ps}}$ is an invertible object of order two in $\mathrm{Rep}(G)$. Tensoring by $\Gamma_{\mathrm{ps}}$ defines an autoequivalence functor of the representation category $\mathcal{C} = \mathrm{Rep}(G)$:
\begin{equation}
    \mathcal{F}_{\mathrm{ps}}: \mathcal{C}\rightarrow\mathcal{C}, \quad \mathcal{F}_{\mathrm{ps}}(V) = V\otimes\Gamma_{\mathrm{ps}},
\end{equation}
satisfying the involutive identity $\mathcal{F}_{\mathrm{ps}}^2 \cong \mathrm{Id}_{\mathcal C}$. 

This autoequivalence provides the rigorous representation-theoretic foundation for the physical mappings established in Secs.~\ref{sec:pseudoscalar_mapping} and \ref{sec:rep_analysis}. Specifically, applying $\mathcal{F}_{\mathrm{ps}}$ to the polar vector representation $\Gamma_{\mathrm{polar}}$ [Eq.~\eqref{eq:vec_dual}], to an individual vibrational irreducible representation $\Gamma$ [Eq.~\eqref{eq:irrep_map}], and to the complete parent-lattice vibrational representation $\Gamma_{\mathrm{ph}}^{\mathrm{tot}}$ [Eq.~\eqref{eq:tot_mag_map}] yields their corresponding axial partners:
\begin{align}
    \mathcal{F}_{\mathrm{ps}}(\Gamma_{\mathrm{polar}}) &= \Gamma_{\mathrm{polar}} \otimes \Gamma_{\mathrm{ps}} = \Gamma_{\mathrm{axial}}, \\
    \mathcal{F}_{\mathrm{ps}}(\Gamma) &= \Gamma \otimes \Gamma_{\mathrm{ps}} = \Gamma_{\mathrm{mag}}, \\
    \mathcal{F}_{\mathrm{ps}}(\Gamma_{\mathrm{ph}}^{\mathrm{tot}}) &= \Gamma_{\mathrm{ph}}^{\mathrm{tot}} \otimes \Gamma_{\mathrm{ps}} = \Gamma_{\mathrm{mag}}^{\mathrm{tot}}.
\end{align}
Because $\mathcal{F}_{\mathrm{ps}}$ is an autoequivalence of categories, it is fully faithful and exact, preserving representation dimensions, irreducible decomposition multiplicities, and direct sums while twisting only the spatial parity sector.

\subsection{Multiplicity Preservation, Projection Equivalence, and Nodal Invariance}
\label{subsec:projection_nodal_app}

\begin{proposition}[Multiplicity Preservation via Categorical Autoequivalence]
Let $\mathcal{F}_{\mathrm{ps}}: \mathrm{Rep}(G) \to \mathrm{Rep}(G)$ be the pseudoscalar autoequivalence functor, mapping a generic polar representation space $V_{\mathrm{pol}}$ (corresponding to $\Gamma_{\mathrm{ph}}^{\mathrm{tot}}$ in crystal applications) to its axial partner $V_{\mathrm{ax}} = \mathcal{F}_{\mathrm{ps}}(V_{\mathrm{pol}})$ (corresponding to $\Gamma_{\mathrm{mag}}^{\mathrm{tot}}$). For any irreducible representation $\Gamma$ of $G$, the multiplicity of $\Gamma \otimes \Gamma_{\mathrm{ps}}$ in $V_{\mathrm{ax}}$ equals the multiplicity of $\Gamma$ in $V_{\mathrm{pol}}$:
\begin{equation}
    n_{\mathrm{mag}}^{(\Gamma \otimes \Gamma_{\mathrm{ps}})} = n_{\mathrm{ph}}^{(\Gamma)}.
\end{equation}
\end{proposition}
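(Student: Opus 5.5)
The plan is to prove the statement in two complementary ways: first the structural (categorical) argument, then a character-theoretic check that makes every step explicit.

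\emph{Categorical route.} The multiplicity of an irreducible $\Gamma$ in a representation $V$ is $n_V^{(\Gamma)}=\dim\mathrm{Hom}_G(\Gamma,V)$; this uses complete reducibility of complex representations of the finite group $G$ (Maschke) and Schur's lemma. An autoequivalence induces bijections on Hom-spaces, and $\mathcal{F}_{\mathrm{ps}}$ is $\mathbb{C}$-linear, so these bijections are linear isomorphisms. Applied to $\mathcal{F}_{\mathrm{ps}}$, this gives
\begin{equation}
\mathrm{Hom}_G(\Gamma,V_{\mathrm{pol}})\cong\mathrm{Hom}_G\bigl(\Gamma\otimes\Gamma_{\mathrm{ps}},\,V_{\mathrm{pol}}\otimes\Gamma_{\mathrm{ps}}\bigr).
\end{equation}
Concretely, an intertwiner $f$ is sent to $f\otimes\mathrm{id}_{\Gamma_{\mathrm{ps}}}$. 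The inverse is obtained by tensoring once more with $\Gamma_{\mathrm{ps}}$ and using $\Gamma_{\mathrm{ps}}\otimes\Gamma_{\mathrm{ps}}\cong\mathbf{1}$.

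Two further facts are needed. First, $\Gamma\otimes\Gamma_{\mathrm{ps}}$ is again irreducible. This holds because $\mathcal{F}_{\mathrm{ps}}$ preserves subobjects: a proper invariant subspace $W\subset\Gamma\otimes\Gamma_{\mathrm{ps}}$ would yield the proper invariant subspace $W\otimes\Gamma_{\mathrm{ps}}$ of $\Gamma$. Second, taking dimensions on both sides of the isomorphism gives $n_{\mathrm{mag}}^{(\Gamma\otimes\Gamma_{\mathrm{ps}})}=n_{\mathrm{ph}}^{(\Gamma)}$, where $V_{\mathrm{ax}}=\Gamma_{\mathrm{mag}}^{\mathrm{tot}}$ is identified with $\mathcal{F}_{\mathrm{ps}}(\Gamma_{\mathrm{ph}}^{\mathrm{tot}})$ via Eq.~\eqref{eq:tot_mag_map}.

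\emph{Character route.} Insert $\chi_{\mathrm{ax}}=\chi_{\mathrm{pol}}\chi_{\mathrm{ps}}$ and $\chi^{(\Gamma\otimes\Gamma_{\mathrm{ps}})}=\chi^{(\Gamma)}\chi_{\mathrm{ps}}$ into the orthogonality formula Eq.~\eqref{eq:n_ph}. Because $\chi_{\mathrm{ps}}(g)=\det R_g$ is real with $\chi_{\mathrm{ps}}(g)^2=1$, the factors cancel term by term in the sum over $g$. This reproduces the computation already given before Eq.~\eqref{eq:template1}.

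The only step that needs care is the irreducibility of $\Gamma\otimes\Gamma_{\mathrm{ps}}$, since the formula for $n$ as a Hom-dimension applies only to irreducibles. Character theory settles it directly:
\begin{equation}
\langle\chi^{(\Gamma)}\chi_{\mathrm{ps}},\chi^{(\Gamma)}\chi_{\mathrm{ps}}\rangle=\langle\chi^{(\Gamma)},\chi^{(\Gamma)}\rangle=1,
\end{equation}
which again uses $|\chi_{\mathrm{ps}}|^2=1$. This also shows that $\Gamma\mapsto\Gamma\otimes\Gamma_{\mathrm{ps}}$ is a permutation of the irreducibles of $G$. As a result, the decomposition of $V_{\mathrm{ax}}$ is indexed bijectively by that of $V_{\mathrm{pol}}$, and the multiplicities agree label by label.
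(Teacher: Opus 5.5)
Your proposal is correct and follows the paper's proof: you write the multiplicity as $\dim_{\mathbb C}\mathrm{Hom}_G(\Gamma,V)$ and observe that the fully faithful, $\mathbb{C}$-linear functor $\mathcal{F}_{\mathrm{ps}}$ induces a linear isomorphism $\mathrm{Hom}_G(\Gamma,V_{\mathrm{pol}})\cong\mathrm{Hom}_G(\Gamma\otimes\Gamma_{\mathrm{ps}},V_{\mathrm{ax}})$. Your explicit check that $\Gamma\otimes\Gamma_{\mathrm{ps}}$ is irreducible, via $\langle\chi^{(\Gamma)}\chi_{\mathrm{ps}},\chi^{(\Gamma)}\chi_{\mathrm{ps}}\rangle=1$, fills a step the paper's proof uses only implicitly, and your character route just reproduces the main-text computation preceding Eq.~\eqref{eq:template1}.
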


\begin{proof}
In representation theory, the multiplicity of an irrep $\Gamma$ in a representation $V$ is given by the dimension of the vector space of intertwiners (morphisms in $\mathcal{C}$):
\[
    n_{\mathrm{ph}}^{(\Gamma)} = \dim_{\mathbb{C}} \mathrm{Hom}_G(\Gamma, V_{\mathrm{pol}}).
\]
Because $\mathcal{F}_{\mathrm{ps}}$ is an equivalence of categories (hence fully faithful), it induces a natural isomorphism between Hom-sets:
\[
    \mathrm{Hom}_G(\Gamma, V_{\mathrm{pol}}) \cong \mathrm{Hom}_G\left(\mathcal{F}_{\mathrm{ps}}(\Gamma), \mathcal{F}_{\mathrm{ps}}(V_{\mathrm{pol}})\right) = \mathrm{Hom}_G(\Gamma \otimes \Gamma_{\mathrm{ps}}, V_{\mathrm{ax}}).
\]
Taking vector-space dimensions yields $n_{\mathrm{mag}}^{(\Gamma \otimes \Gamma_{\mathrm{ps}})} = \dim_{\mathbb{C}} \mathrm{Hom}_G(\Gamma \otimes \Gamma_{\mathrm{ps}}, V_{\mathrm{ax}}) = n_{\mathrm{ph}}^{(\Gamma)}$.
\end{proof}

\begin{proposition}[Matrix-Element Projection Equivalence under Common Coefficient Space]
Under the natural identification of the common site-resolved Cartesian coefficient space, the generic spatial matrix-element projection operator $P_{\mathrm{ax}, mn}^{(\Gamma \otimes \Gamma_{\mathrm{ps}})}$ acting on $V_{\mathrm{ax}}$ (denoted $P_{\mathrm{mag}, mn}^{(\Gamma \otimes \Gamma_{\mathrm{ps}})}$ in the main text) is component-by-component identical to $P_{\mathrm{pol}, mn}^{(\Gamma)}$ acting on $V_{\mathrm{pol}}$ (denoted $P_{\mathrm{ph}, mn}^{(\Gamma)}$ in the main text):
\begin{equation}
    P_{\mathrm{ax}, mn}^{(\Gamma \otimes \Gamma_{\mathrm{ps}})} = P_{\mathrm{pol}, mn}^{(\Gamma)}.
\end{equation}
\end{proposition}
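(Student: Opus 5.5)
The plan is to prove the identity by direct computation from the definition of the matrix-element projector, as in Sec.~\ref{subsec:proj_op_equiv}, after making the "natural identification" precise. I will fix the underlying coefficient space $W=\mathbb{C}^{3N}$ (or its magnetic-sublattice restriction), with basis labelled by site $\kappa$ and Cartesian index $\alpha$. The polar action on $W$ is $\hat D_{\mathrm{pol}}(g)$, which permutes sites according to $\Gamma_{\mathrm{eq}}$ and rotates components by $R_g$. The axial space $V_{\mathrm{ax}}=\mathcal{F}_{\mathrm{ps}}(V_{\mathrm{pol}})=V_{\mathrm{pol}}\otimes\Gamma_{\mathrm{ps}}$ is then realized on the same $W$. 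Because $\Gamma_{\mathrm{ps}}$ is one-dimensional, the canonical isomorphism $v\otimes 1\mapsto v$ identifies the underlying vector spaces, and it carries the action to $\hat D_{\mathrm{ax}}(g)=\chi_{\mathrm{ps}}(g)\hat D_{\mathrm{pol}}(g)$. This is exactly the site-resolved form of Eq.~\eqref{eq:vec_dual}: each local moment transforms with $\det(R_g)R_g$, and the site permutation is unchanged.

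Next I fix the matrix realization of the twisted irrep. Given unitary matrices $D^{(\Gamma)}(g)$ for $\Gamma$, I take $D^{(\Gamma\otimes\Gamma_{\mathrm{ps}})}_{mn}(g)=D^{(\Gamma)}_{mn}(g)\chi_{\mathrm{ps}}(g)$. This is a legitimate unitary irreducible realization because $\mathcal{F}_{\mathrm{ps}}$ preserves irreducibility, and it has the same dimension $d_\Gamma$. Substituting both expressions into
\[
P_{\mathrm{ax},mn}^{(\Gamma\otimes\Gamma_{\mathrm{ps}})}=\frac{d_\Gamma}{|G|}\sum_{g\in G}\bigl[D^{(\Gamma\otimes\Gamma_{\mathrm{ps}})}_{mn}(g)\bigr]^*\hat D_{\mathrm{ax}}(g)
\]
gives the factor $\chi_{\mathrm{ps}}(g)^*\chi_{\mathrm{ps}}(g)=\det(R_g)^2=1$. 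This holds because $R_g\in O(3)$ makes $\chi_{\mathrm{ps}}$ real and equal to $\pm1$. The two projectors therefore agree term by term in the group sum, which yields the stated identity as operators on $W$.

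The main obstacle is not the algebra but the basis dependence. If someone chose a different unitary realization of $\Gamma\otimes\Gamma_{\mathrm{ps}}$, say $U D^{(\Gamma)}\chi_{\mathrm{ps}}U^{-1}$, the individual $P_{mn}$ would differ by the corresponding change of basis within the irrep. I will handle this by stating explicitly that the identity holds for the induced (twisted) matrix realization, which is the one the statement intends. I will then note that the basis-independent consequences hold for any choice. These are the character projector obtained as $\sum_m P_{mm}$, giving Eq.~\eqref{eq:char_proj}, and the equality of images, Eq.~\eqref{eq:isotypic_iso}.

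Finally, I will remark that this concrete identity is the coordinate expression of the categorical statement in Appendix~\ref{subsec:rep_category}. The functor $\mathcal{F}_{\mathrm{ps}}$ acts as the identity on underlying vector spaces and on morphisms, so it sends the $G$-equivariant maps used to build projectors to themselves. This is consistent with the Hom-set isomorphism used in the preceding Proposition on multiplicity preservation.
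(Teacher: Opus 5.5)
Your proposal is correct and follows the paper's proof: you substitute the twisted matrices $D^{(\Gamma)}_{mn}(g)\chi_{\mathrm{ps}}(g)$ and the axial action $\chi_{\mathrm{ps}}(g)\hat D_{\mathrm{pol}}(g)$ into the projector and cancel $\chi_{\mathrm{ps}}(g)^2=1$ term by term in the group sum. You also point out that the component-wise identity requires the induced realization of $\Gamma\otimes\Gamma_{\mathrm{ps}}$, whereas the character projector and its image do not depend on the realization; this sharpens a point the paper leaves implicit. Your closing categorical remark is fine as long as it is read, as the paper stresses, as a property of this concrete identification of underlying spaces and not as a consequence of the abstract autoequivalence alone.
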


\begin{proof}
The matrix-element projection operator for the polar representation $V_{\mathrm{pol}}$ onto the $mn$-th component of $\Gamma$ is defined as
\[
    P_{\mathrm{pol}, mn}^{(\Gamma)} = \frac{d_\Gamma}{|G|} \sum_{g\in G} D_{mn}^{(\Gamma)}(g)^* \hat{D}_{\mathrm{pol}}(g).
\]
For the twisted axial realization $V_{\mathrm{ax}} = V_{\mathrm{pol}} \otimes \Gamma_{\mathrm{ps}}$, the representation matrix element becomes $D_{mn}^{(\Gamma \otimes \Gamma_{\mathrm{ps}})}(g) = D_{mn}^{(\Gamma)}(g) \chi_{\mathrm{ps}}(g)$, and the spatial action is $\hat{D}_{\mathrm{ax}}(g) = \chi_{\mathrm{ps}}(g) \hat{D}_{\mathrm{pol}}(g).$ Substituting these expressions into the projection operator gives
\[
    P_{\mathrm{ax}, mn}^{(\Gamma\otimes\Gamma_{\mathrm{ps}})} = \frac{d_{\Gamma}}{|G|} \sum_{g\in G} \left[ D_{mn}^{(\Gamma)}(g) \chi_{\mathrm{ps}}(g) \right]^* \left[ \chi_{\mathrm{ps}}(g) \hat{D}_{\mathrm{pol}}(g) \right].
\]
Since $\chi_{\mathrm{ps}}(g) = \det(R_g) \in \{+1, -1\}$ is real and satisfies $[\chi_{\mathrm{ps}}(g)]^2 = 1$ for all orthogonal operations $g \in G$, the parity factors cancel identically at every group element:
\[
    P_{\mathrm{ax}, mn}^{(\Gamma\otimes\Gamma_{\mathrm{ps}})} = \frac{d_{\Gamma}}{|G|} \sum_{g\in G} D_{mn}^{(\Gamma)}(g)^* [\chi_{\mathrm{ps}}(g)]^2 \hat{D}_{\mathrm{pol}}(g) = P_{\mathrm{pol}, mn}^{(\Gamma)}.
\]
Crucially, this projection equivalence does not follow from the abstract category equivalence of $\mathrm{Rep}(G)$ alone, but arises from combining the representation parity twist with the common underlying Cartesian coefficient-space realization, leading to the exact cancellation of pseudoscalar characters $[\chi_{\mathrm{ps}}(g)]^2 \equiv 1$.
\end{proof}

\begin{corollary}[Character Projection Equivalence, Isotypic Natural Isomorphism, and Nodal Invariance]
Summing the matrix-element projection operators over diagonal components $m=n$ yields the character projection identity $P_{\mathrm{ax}}^{(\Gamma \otimes \Gamma_{\mathrm{ps}})} = P_{\mathrm{pol}}^{(\Gamma)}$. Consequently, the isotypic component space $\mathrm{Im}(P_{\mathrm{ax}}^{(\Gamma \otimes \Gamma_{\mathrm{ps}})})$ is identical to $\mathrm{Im}(P_{\mathrm{pol}}^{(\Gamma)})$ under the common Cartesian coefficient-space identification, and naturally isomorphic as an abstract representation space. 

Furthermore, let $\mathbf{\Psi}_{\mathrm{pol}}^{(\Gamma)}(\mathbf{r}_0) \in \mathbb{C}^{3 \times d_\Gamma}$ denote the site-resolved Cartesian coefficient matrix whose columns are the basis functions of the polar irrep $\Gamma$ evaluated at position $\mathbf{r}_0$. For any operation $h \in H_{\mathbf{r}_0}$ in the local stabilizer subgroup preserving position $\mathbf{r}_0$ ($h\mathbf{r}_0 = \mathbf{r}_0$), covariance under the spatial action requires
\begin{equation}
    R_h \mathbf{\Psi}_{\mathrm{pol}}^{(\Gamma)}(\mathbf{r}_0) = \mathbf{\Psi}_{\mathrm{pol}}^{(\Gamma)}(\mathbf{r}_0) D^{(\Gamma)}(h).
    \label{eq:nodal_pol}
\end{equation}
For the corresponding axial basis matrix $\mathbf{\Psi}_{\mathrm{ax}}^{(\Gamma \otimes \Gamma_{\mathrm{ps}})}(\mathbf{r}_0)$, transforming under the spatial axial action $\det(R_h) R_h$ and representation matrix $D^{(\Gamma \otimes \Gamma_{\mathrm{ps}})}(h) = \det(R_h) D^{(\Gamma)}(h)$, the covariance condition reads
\begin{equation}
    \det(R_h) R_h \mathbf{\Psi}_{\mathrm{ax}}^{(\Gamma \otimes \Gamma_{\mathrm{ps}})}(\mathbf{r}_0) = \mathbf{\Psi}_{\mathrm{ax}}^{(\Gamma \otimes \Gamma_{\mathrm{ps}})}(\mathbf{r}_0) \left[ \det(R_h) D^{(\Gamma)}(h) \right].
\end{equation}
Because $\det(R_h) = \pm 1 \neq 0$, the scalar determinant cancels identically from both sides, yielding the exact same covariance constraint on the Cartesian coefficient matrix:
\begin{equation}
    R_h \mathbf{\Psi}(\mathbf{r}_0) = \mathbf{\Psi}(\mathbf{r}_0) D^{(\Gamma)}(h).
    \label{eq:nodal_cond}
\end{equation}
Consequently, all symmetry-enforced vanishing components and nodal constraints imposed by the local site stabilizer are identical between polar and axial realizations, proving that all symmetry-enforced nodal manifolds are strictly invariant under the pseudoscalar autoequivalence functor $\mathcal{F}_{\mathrm{ps}}$.
\end{corollary}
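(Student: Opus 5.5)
The corollary has three parts: character projection equality, equality of isotypic images, and nodal invariance. I would prove them in that order, leaning on the preceding Proposition (matrix-element projection equivalence).

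\emph{Character projection.} First, set $m=n$ in the preceding Proposition and sum over $m=1,\dots,d_\Gamma$. Because the twisted irrep $\Gamma\otimes\Gamma_{\mathrm{ps}}$ has the same dimension $d_\Gamma$ and the same matrix realization up to the scalar factor $\chi_{\mathrm{ps}}(g)$, the diagonal sums run over identical index ranges. This gives $P_{\mathrm{ax}}^{(\Gamma\otimes\Gamma_{\mathrm{ps}})}=\sum_m P_{\mathrm{ax},mm}^{(\Gamma\otimes\Gamma_{\mathrm{ps}})}=\sum_m P_{\mathrm{pol},mm}^{(\Gamma)}=P_{\mathrm{pol}}^{(\Gamma)}$ as operators on the common Cartesian coefficient space. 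Equivalently, one can check it directly: in $\chi^{(\Gamma)}(g)^*\chi_{\mathrm{ps}}(g)\cdot\chi_{\mathrm{ps}}(g)\hat D_{\mathrm{pol}}(g)$ the pseudoscalar factors cancel because $\chi_{\mathrm{ps}}^2=1$.

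\emph{Isotypic images.} Since the two operators are literally the same linear map on the same vector space, their images coincide, which gives the first claimed equality. For the abstract natural isomorphism, I would observe that the identity map on the coefficient space is not $G$-equivariant from $V_{\mathrm{pol}}$ to $V_{\mathrm{ax}}$. Instead, it is an isomorphism $\mathrm{Im}(P_{\mathrm{pol}}^{(\Gamma)})\otimes\Gamma_{\mathrm{ps}}\to \mathrm{Im}(P_{\mathrm{ax}}^{(\Gamma\otimes\Gamma_{\mathrm{ps}})})$, because $\hat D_{\mathrm{ax}}=\chi_{\mathrm{ps}}\hat D_{\mathrm{pol}}$. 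This is exactly $\mathcal{F}_{\mathrm{ps}}$ applied to the isotypic summand. Naturality then follows because $\mathcal{F}_{\mathrm{ps}}$ acts on morphisms as $f\mapsto f\otimes\mathrm{id}$, which commutes with the restriction to isotypic components. The only subtle point here is stating precisely which category the isomorphism lives in. I would phrase it as an isomorphism of $G$-representations between $\mathcal{F}_{\mathrm{ps}}$ of the polar isotypic space and the axial isotypic space.

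\emph{Nodal invariance.} First I would justify Eq.~\eqref{eq:nodal_pol}. The columns of $\mathbf{\Psi}_{\mathrm{pol}}$ are partner functions, so $\hat D_{\mathrm{pol}}(h)\psi_n=\sum_m\psi_m D^{(\Gamma)}_{mn}(h)$. For $h$ fixing $\mathbf{r}_0$, the site-permutation part maps the site at $\mathbf{r}_0$ to itself, so evaluating at $\mathbf{r}_0$ leaves only the local $3\times3$ action $R_h$; this yields the covariance condition. The same argument with $\det(R_h)R_h$ and $\det(R_h)D^{(\Gamma)}(h)$ gives the axial condition. Dividing by $\det(R_h)=\pm1$ produces Eq.~\eqref{eq:nodal_cond}, so both coefficient matrices satisfy the same linear system.

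The solution space of that system, taken over all $h\in H_{\mathbf{r}_0}$, is therefore identical in the two cases. Any Cartesian component (or whole site block) forced to vanish on it vanishes in both. This is the step I expect to be the main obstacle. It requires care that the equality concerns the constraint set, not individual accidental zeros. It also requires that the axial basis be chosen as the same coefficient vectors, which the image equality from the previous step guarantees.
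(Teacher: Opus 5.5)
Your proposal is correct and follows essentially the same route as the paper. The three steps match: diagonal summation of the matrix-element identity, equality of the images of one and the same operator on the common coefficient space, and cancellation of $\det(R_h)=\pm1$ from the stabilizer covariance condition. Your two refinements make explicit what the paper leaves implicit. The first is deriving the polar covariance condition by evaluating the partner-function relation at the fixed site $\mathbf{r}_0$. The second is stating the abstract isomorphism as $\mathcal{F}_{\mathrm{ps}}\bigl(\mathrm{Im}\,P_{\mathrm{pol}}^{(\Gamma)}\bigr)\cong\mathrm{Im}\,P_{\mathrm{ax}}^{(\Gamma\otimes\Gamma_{\mathrm{ps}})}$, rather than as a $G$-isomorphism between the two images themselves, which would fail whenever $\Gamma\not\cong\Gamma\otimes\Gamma_{\mathrm{ps}}$.
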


\subsection{Compatibility of the Pseudoscalar Twist with Gray-Group Corepresentations}
\label{subsec:corep_compat}

The parent paramagnetic phase before magnetic ordering is formally described by the gray magnetic parent group $\mathcal{G}_{\mathrm{P}} = G \times \Theta_{\mathcal{T}} = G \cup \mathcal{T}G$, where $G$ is the unitary spatial crystallographic point group and $\Theta_{\mathcal{T}} = \{E, \mathcal{T}\}$ is the time-reversal group. In the classical macroscopic setting considered in this work, physical vector fields and order parameters are classified by their decoupled spatial irreducible representation $\Gamma$ and time-reversal parity $\tau_{\mathcal{T}} \in \{+1, -1\}$. Parent-phase vibrational normal modes transform as polar, time-even displacement fields $(\Gamma_{\mathrm{ph}}, +1)$, whereas symmetry-adapted magnetic order parameters transform as axial, time-odd moment configurations $(\Gamma_{\mathrm{mag}}, -1) = (\Gamma_{\mathrm{ph}} \otimes \Gamma_{\mathrm{ps}}, -1)$.

At the full gray-group level, this complete polar-to-magnetic transformation requires the simultaneous twisting of spatial parity and inversion of time-reversal grading. In the present real classical order-parameter setting, this combined mapping is formalized as a factorized one-dimensional antiunitary twist, symbolically denoted by:
\begin{equation}
    \widetilde{\Gamma}_{\mathrm{ps}}^- = \Gamma_{\mathrm{ps}} \boxtimes \mathbf{1}^-,
\end{equation}
where $\Gamma_{\mathrm{ps}}$ acts on the spatial sector via $\chi_{\mathrm{ps}}(g) = \det(R_g)$ for all $g \in G$, and $\mathbf{1}^-$ specifies the time-reversal-odd grading $\tau_{\mathcal{T}} = -1$. 

Rather than invoking formal tensor products on antiunitary representation categories, this transformation is explicitly defined as an algebraic twist on the gray-group corepresentation $\mathfrak{D} = \{D(g), D(g\mathcal{T})\}$:
\begin{equation}
    \widetilde{\mathcal{F}}_{\mathrm{ps}}^-: \mathfrak{D} \longmapsto \widetilde{\mathfrak{D}},
\end{equation}
whose matrix actions on the unitary and antiunitary elements are respectively given by:
\begin{align}
    \widetilde{D}(g) &= \chi_{\mathrm{ps}}(g) D(g), \label{eq:corep_twist_u} \\
    \widetilde{D}(g\mathcal{T}) &= -\chi_{\mathrm{ps}}(g) D(g\mathcal{T}), \label{eq:corep_twist_a}
\end{align}
for all $g \in G$. Because $\chi_{\mathrm{ps}}(g) \in \{+1, -1\} \subset \mathbb{R}$ is real, the scalar factor commutes identically with the antilinear complex conjugation associated with the antiunitary operations, ensuring that the twisted operators $\widetilde{\mathfrak{D}}$ obey the corresponding Wigner corepresentation multiplication relations of $\mathcal{G}_{\mathrm{P}}$.

Applied to the parent-phase vibrational normal modes, this twist executes the complete physical mapping from polar lattice vibrations to axial magnetic configurations:
\begin{equation}
    \widetilde{\mathcal{F}}_{\mathrm{ps}}^-: (\Gamma_{\mathrm{ph}}, +1) \longmapsto (\Gamma_{\mathrm{ph}} \otimes \Gamma_{\mathrm{ps}}, -1) = (\Gamma_{\mathrm{mag}}, -1).
\end{equation}

Let $U: \mathrm{Corep}(\mathcal{G}_{\mathrm{P}}) \to \mathrm{Rep}(G)$ denote the restriction functor from gray-group corepresentations to representations of the unitary spatial subgroup $G$. Since the temporal factor in Eq.~\eqref{eq:corep_twist_a} acts exclusively on the antiunitary coset $G\mathcal{T}$, the restriction to the unitary subgroup satisfies the exact functorial compatibility relation:
\begin{equation}
    U \circ \widetilde{\mathcal{F}}_{\mathrm{ps}}^- \simeq \mathcal{F}_{\mathrm{ps}} \circ U,
\end{equation}
making the following diagram commute:
\[
\begin{CD}
\mathrm{Corep}(\mathcal{G}_{\mathrm{P}}) @>\widetilde{\mathcal F}_{\mathrm{ps}}^->> \mathrm{Corep}(\mathcal{G}_{\mathrm{P}})\\
@V U VV @VV U V\\
\mathrm{Rep}(G) @>\mathcal F_{\mathrm{ps}}>> \mathrm{Rep}(G)
\end{CD}
\]

Furthermore, because $[\chi_{\mathrm{ps}}(g)]^2 \equiv 1$ and $(-1)^2 = +1$, applying the twist twice restores the original corepresentation, establishing that the transformation is involutive up to natural equivalence:
\begin{equation}
    (\widetilde{\mathcal{F}}_{\mathrm{ps}}^-)^2 \simeq \mathrm{Id}_{\mathrm{Corep}(\mathcal{G}_{\mathrm{P}})}.
\end{equation}

Crucially, the factorized structure $\widetilde{\Gamma}_{\mathrm{ps}}^- = \Gamma_{\mathrm{ps}} \boxtimes \mathbf{1}^-$ provides the exact representation-theoretic foundation for the spatial-temporal separation principle: the spatial geometric configuration of the magnetic order is governed entirely by the spatial pseudoscalar twist $\mathcal{F}_{\mathrm{ps}}$, while the independent factor $\mathbf{1}^-$ specifies its time-reversal-odd magnetic character. The corepresentation twist is invoked here as a structural compatibility extension for the classical gray parent group; the constructive template derivations, matrix-element identities ($P_{\mathrm{mag}} = P_{\mathrm{ph}}$), and nodal manifold constraints in the main text remain formulated within the unitary spatial representation category.

\subsection{Categorical Interpretation and Physical Dictionary}
\label{subsec:cat_dict}

The categorical terminology in this appendix is not an additional physical layer but exposes the structural reason behind the representation-preserving and involutive properties of the pseudoscalar mapping. It also distinguishes the representation-level twist from the site-restriction and projection operations used for concrete template construction. Table~\ref{tab:dictionary} summarizes the conceptual correspondence between representation-theoretic definitions, category-theoretic terminology, and physical actions.

\begin{table*}[htb]
\centering
\small
\renewcommand{\arraystretch}{1.35}
\caption{Representation--category--physics correspondence within the parent-group framework. The table summarizes the principal representation-theoretic structures, their category-theoretic counterparts, and their explicit mathematical or physical meanings.}
\label{tab:dictionary}
\begin{tabularx}{\textwidth}{@{} >{\raggedright\arraybackslash}p{0.27\textwidth} >{\raggedright\arraybackslash}p{0.27\textwidth} X @{}}
\toprule
\textbf{Representation / Physical Object}
&
\textbf{Category-Theoretic Term}
&
\textbf{Mathematical / Physical Meaning and Constructive Action}
\\
\midrule

\multicolumn{3}{@{}l}{\textit{\textbf{1. Unitary Spatial Representation Category $\mathrm{Rep}(G)$}}} \\

Single vibrational sector $\Gamma$
&
Simple object (irrep) $\Gamma \in \mathrm{Irr}(G)$
&
Independent symmetry-defined spatial sector of the parent lattice
\\

Total parent-lattice vibrational representation space
&
Semisimple object $V \in \mathrm{Rep}(G)$
&
$V_{\mathrm{ph}}^{\mathrm{tot}}
=
\bigoplus_{\Gamma}
n_{\mathrm{ph}}^{(\Gamma)}\Gamma$
\\

Irrep multiplicity $n_{\mathrm{ph}}^{(\Gamma)}$
&
Hom-space dimension
$\dim_{\mathbb C}\mathrm{Hom}_G(\Gamma,V)$
&
Multiplicity of $\Gamma$ in the parent vibrational representation:
$n_{\mathrm{ph}}^{(\Gamma)}
=
\dim_{\mathbb C}\mathrm{Hom}_G
(\Gamma,V_{\mathrm{ph}}^{\mathrm{tot}})$
\\

Symmetry-preserving map
&
Morphism (intertwiner)
$f\in\mathrm{Hom}_G(V,W)$
&
Linear map satisfying
$fD_V(g)=D_W(g)f$
\\

Pseudoscalar representation $\Gamma_{\mathrm{ps}}$
&
Invertible object of order two in $\mathrm{Rep}(G)$
&
Universal one-dimensional spatial parity twist:
$\chi_{\mathrm{ps}}(g)=\det(R_g)$
\\

Pseudoscalar representation twist
&
Autoequivalence functor
$\mathcal{F}_{\mathrm{ps}}:V\mapsto V\otimes\Gamma_{\mathrm{ps}}$
&
Structure-preserving polar--axial representation conversion
\\

Representation-level involution
&
Involutive functor
$\mathcal{F}_{\mathrm{ps}}^2\simeq\mathrm{Id}_{\mathrm{Rep}(G)}$
&
$\Gamma_{\mathrm{ps}}^{\otimes2}\cong\Gamma_1$, hence reversible self-inversion of the representation twist
\\

Single-mode mapping
$\Gamma\mapsto\Gamma\otimes\Gamma_{\mathrm{ps}}$
&
Functor action $\mathcal{F}_{\mathrm{ps}}(\Gamma)$
&
Direct one-to-one assignment of a parent vibrational sector to its symmetry-adapted magnetic sector
\\

Total representation mapping
$\Gamma_{\mathrm{mag}}^{\mathrm{tot}}
=
\Gamma_{\mathrm{ph}}^{\mathrm{tot}}\otimes\Gamma_{\mathrm{ps}}$
&
Functor action
$\mathcal{F}_{\mathrm{ps}}(V_{\mathrm{ph}}^{\mathrm{tot}})$
&
Global mapping of the complete parent-lattice vibrational configuration space to the magnetic configuration space
\\

Multiplicity conservation
$n_{\mathrm{mag}}^{(\Gamma\otimes\Gamma_{\mathrm{ps}})}
=
n_{\mathrm{ph}}^{(\Gamma)}$
&
Induced natural isomorphism of Hom-spaces
&
$n_{\mathrm{mag}}^{(\Gamma\otimes\Gamma_{\mathrm{ps}})} = n_{\mathrm{ph}}^{(\Gamma)}$; exact preservation of irrep multiplicities and isotypic-sector dimensions
\\

\midrule
\multicolumn{3}{@{}l}{\textit{\textbf{2. Real-Space Coordinate Realization and Geometry}}} \\

Common Cartesian coefficient space
&
Real coordinate realization
&
Unified site-resolved Cartesian space for displacement and localized magnetic-moment components
\\

Matrix projector identity
&
Equality of projected subspaces
&
$P_{\mathrm{mag},mn}^{(\Gamma\otimes\Gamma_{\mathrm{ps}})}
=
P_{\mathrm{ph},mn}^{(\Gamma)}$;
this yields the Template Principle
\\

Isotypic-space identity
&
Natural isomorphism of representation spaces
&
$\mathrm{Im}(P_{\mathrm{mag}}^{(\Gamma\otimes\Gamma_{\mathrm{ps}})}) = \mathrm{Im}(P_{\mathrm{ph}}^{(\Gamma)})$ under the natural Cartesian identification
\\

Local site-stabilizer covariance
&
Stabilizer-induced nodal invariance
&
Exact preservation of symmetry-enforced vanishing components and nodal manifolds
\\

\midrule
\multicolumn{3}{@{}l}{\textit{\textbf{3. Paramagnetic Gray Parent Group $\mathcal{G}_{\mathrm{P}}=G\times\Theta_{\mathcal T}$}}} \\

Paramagnetic gray parent group
&
Corepresentation category
$\mathrm{Corep}(\mathcal{G}_{\mathrm{P}})$
&
Full high-temperature parent symmetry combining the spatial group $G$ and time-reversal group $\Theta_{\mathcal T}$
\\

Restriction to the spatial subgroup $G$
&
Restriction functor
$U:\mathrm{Corep}(\mathcal{G}_{\mathrm{P}})\to\mathrm{Rep}(G)$
&
Retention of the unitary spatial transformation structure while discarding the antiunitary action
\\

Time-odd factorized twist
$\widetilde{\Gamma}_{\mathrm{ps}}^-
=
\Gamma_{\mathrm{ps}}\boxtimes\mathbf{1}^-$
&
One-dimensional gray-group twist
&
Combined spatial--temporal mapping
$(\Gamma_{\mathrm{ph}},+1)
\mapsto
(\Gamma_{\mathrm{mag}},-1)$
\\

Functorial compatibility
&
Commutative diagram
$U\circ\widetilde{\mathcal F}_{\mathrm{ps}}^-
\simeq
\mathcal F_{\mathrm{ps}}\circ U$
&
Formal compatibility of the spatial pseudoscalar twist with the independent time-reversal grading
\\

Full magnetic twist involution
&
Involutive equivalence
$(\widetilde{\mathcal F}_{\mathrm{ps}}^-)^2
\simeq
\mathrm{Id}$
&
Algebraically self-inverse mapping on the full gray-group symmetry structure
\\

\bottomrule
\end{tabularx}
\end{table*}

\bibliographystyle{apsrev4-2}
\enlargethispage{\baselineskip}
\bibliography{References}

\end{document}